\documentclass[12pt]{article}

\usepackage{amsthm,amssymb,amsmath,graphicx}
\usepackage{color}
\usepackage[top=2cm, bottom=2cm, left=2.5cm, right=3cm]{geometry}
\usepackage[pagebackref=false,colorlinks,linkcolor=blue,citecolor=magenta]{hyperref}
\usepackage{multicol}
\usepackage{tikz}
\usepackage{algcompatible}
\usepackage{subcaption}
\usepackage{algorithm}
\usepackage{algpseudocode}
\usepackage{tcolorbox}
\usepackage{pgfplots}
\usepackage{cite}
\usepackage{amsmath,amssymb,amsfonts,amsthm}
\usepackage{graphicx}
\usepackage{textcomp}
\usepackage{xcolor}
\def\BibTeX{{\rm B\kern-.05em{\sc i\kern-.025em b}\kern-.08em
    T\kern-.1667em\lower.7ex\hbox{E}\kern-.125emX}}

\newcommand{\ket}[1]{|#1\rangle}
\newcommand{\bra}[1]{\langle #1|}

\newcommand{\ketbra}[2]{\ket{#1}\!\bra{#2}}

\usepackage[english]{babel}

\newtheorem{lemma}{Lemma}[section]

\newtheorem{theorem}{Theorem}[section]
\newtheorem{definition}{Definition}[section]

\newtheorem{remark}{Remark}[section]

\newcommand{\mc}[1]{\mathcal{#1}}
\newcommand{\tr}{\mathrm{tr}}

\newcommand{\cT}{\mc T}
\newcommand{\cE}{\mc E}

\newcommand{\cM}{\mc M}

\newcommand{\cD}{\mc{D}}

\newcommand{\cP}{\mc{P}}

\newcommand{\idd}{\operatorname{id}}

\usepackage[affil-it]{authblk}

\author[1]{Omar Fawzi}
\author[2,3]{Li Gao}
\author[4,5]{Mizanur Rahaman}
\author[1]{Mostafa Taheri}

\affil[1]{Univ Lyon, Inria, ENS Lyon, UCBL, LIP, Lyon, France}

\affil[2]{School of Mathematics and Statistics, Wuhan University, Wuhan, China}

\affil[3]{Wuhan Institute of Quantum Technology, Wuhan, China}

\affil[4]{Department of Mathematical Sciences, Chalmers University of Technology and University of Gothenburg, Gothenburg, Sweden}

\affil[5]{Wallenberg Centre for Quantum Technology, Chalmers University of Technology, Gothenburg, Sweden}
\begin{document}

\title{Fast convergence of Dynamic Capacities of GNS-Symmetric Quantum Channels
\thanks{MT and OF acknowledge funding by the European Research Council (ERC Grant AlgoQIP, Agreement No. 851716), LG by the National Natural Science Foundation of China (NNSFC grant No. 12401163) and by the Department of Science and Technology of Hubei Province (Project No. 2025EHA041, No. 2025AFA044). This work started when MR was a Marie-Sk\l odowska-Curie Fellow at ENS de Lyon, and he acknowledges funding from the European Union’s Horizon Research and Innovation programme (Agreement No. HORIZON-MSCA-2022-PF-01, project No. 101108117). This work was partially
supported by the Wallenberg Centre for Quantum Technology (WACQT) funded by Knut and
Alice Wallenberg Foundation (KAW). }
}

\maketitle

\begin{abstract}
 We consider a quantum system described by a quantum channel $\Phi$ that is applied at every time step and study the time evolution of its information capacities. When $\Phi$ is a GNS-symmetric channel (this includes Pauli channels, for example), we give explicit exponential convergence bounds for the classical and quantum capacities. These bounds are in terms of entropic properties of $\Phi$. We further illustrate how these results help quantify the performance of active versus passive error-correction setups.
\end{abstract}

\section{Introduction}

Consider a finite-dimensional quantum system that we would like to use as a memory. 
The system is subject to noise, which we model by a channel $\Phi$ that is applied at each discrete time step. We encode $\log D$ bits or qubits of information at time zero, let the system evolve under $\Phi$ for $t$ time steps, and then apply a decoding map in the hope of recovering our message. A basic question is: how long can we retain information, and how does the size of the message depend on the noise? In other words, what is the minimum error achievable when storing $\log D$ (qu)bits for the channel $\Phi^t$ which is the $t$-fold sequential composition of $\Phi$, and how does this behave as $t\to\infty$?

\medskip\noindent\textbf{Prior work.} When $t=1$, this corresponds to the fundamental question in quantum Shannon theory and has been studied in great depth~\cite{wilde2013quantum}. For large $t$, even though the question has been treated in depth from various viewpoints such as self-correcting quantum memories~\cite{brown2016quantum} or quantum memories with active error correction~\cite{terhal2015quantum}, it is only recently that it has been attacked from a Shannon theory perspective with the work of Singh, Rahaman and Datta~\cite{singh2024zero}. They studied the zero-error regime and showed that if $\Phi$ is mixing (respectively asymptotically entanglement-breaking), then no amount of classical (respectively quantum) information can be recovered perfectly at late times. They derived universal quadratic bounds on the scrambling time: in a $d$-dimensional system, perfect information survives for at most $O(d^2)$ steps, and optimal encodings must live inside the peripheral subspace of the channel, a direct sum of matrix blocks corresponding to peripheral eigenvalues. In the non-scrambling case, the same authors and independently Fawzi, Rahaman and Taheri~\cite{fawzi2024capacities} obtained closed-form expressions: the asymptotic zero-error classical capacity equals $\log \sum_k d_k$ and the quantum capacity equals $\log \max_k d_k$, where $d_k$ depend on the algebraic properties of the peripheral subspaces. Moreover,~\cite{fawzi2024capacities} extended these results to the regime where a non-zero error is allowed. They proved that, for any fixed error $\delta\in[0,1)$, the classical capacity of $\Phi$ as $t\to\infty$ is given by
\[
C_{\infty,\delta}(\Phi)
   = \log\left\lfloor\frac{\sum_k d_k}{1-\delta}\right\rfloor,
\]
while the quantum capacity satisfies
\begin{equation}
Q_{\infty,\delta}(\Phi)
\approx
\log \Bigl(\frac{\max_k d_k}{1-\delta}\Bigr).
\end{equation}
These results imply that, in the infinite-time limit, approximate capacities coincide (up to small additive terms) with their zero-error counterparts and depend solely on the peripheral subspace. Moreover, the infinite-time capacities are additive under tensor products and can be computed efficiently by determining the structure of the peripheral subspace.

More recently, Singh and Datta \cite{singh2024information,singh2025informationstoragetransmissionmarkovian} analysed the finite-time $\delta$-error capacities for general Markovian noise. They derived upper and lower bounds on classical, private classical, entanglement-assisted and quantum capacities of $\Phi^t$ that converge exponentially to the infinite-time formulas when $t$ exceeds a threshold that is quadratic in the input dimension of $\Phi$. 

The purpose of this work is to determine simple conditions under which the information capacity converges very quickly (e.g., in time scaling at most logarithmically in the dimension of the system) to the infinite-time capacity.

\medskip\noindent\textbf{Our contributions.} We focus on GNS-symmetric quantum channels and semigroups, a class of noise models that enjoy a reversible structure with respect to a full-rank invariant state. For such channels, powerful functional inequalities are available; see e.g.,~\cite{rouze2024logarithmic}. Our main results show that the classical and quantum capacities of $\Phi^{2t}$ (and of continuous-time semigroups $\cT_t$) converge exponentially fast to their infinite-time limits. The bounds are explicit, hold for any $t \geq 1$ and depend on entropic properties of $\Phi$ that can in turn be bounded by the spectral gap and minimal eigenvalue of a full-rank fixed point of $\Phi$.  More precisely, if the peripheral subspace of $\Phi$ decomposes as 
\[
\mathcal X_\Phi = \bigoplus_k \mathcal{M}_{d_k} \otimes \omega_k,
\]
then we prove the following:

\begin{itemize}

\item\textbf{One-shot capacities decay exponentially.}
The $\delta$-error classical capacity of $\Phi^{2t}$ satisfies
\[
C_\delta(\Phi^{2t}) \le
\frac{\log\bigl(\sum_k d_k\bigr) 
      + e^{-2\alpha_c t}\,\log \Lambda_c(\mathcal{P}) + h(\delta)}{1-\delta},
\]
and, if $\delta<1/4$, the $\delta$-error quantum capacity satisfies 
\[
Q_\delta(\Phi^{2t}) \le
\frac{\log(\max_k d_k)
      + e^{-2\alpha_c t}\,\log \Lambda_c(\mathcal{P})
      + 2 h(\delta)}{1-4\delta}.
\]
Here $\alpha_c$ is the complete entropy contraction coefficient of $\Phi$, $\mathcal{P}$ is the peripheral projection, $\Lambda_c$ is the Pimsner--Popa constant, and $h(\delta)$ denotes the binary entropy. These bounds show that the capacities approach $\log(\sum_k d_k)$ and $\log(\max_k d_k)$ at an exponential rate $e^{-2\alpha_c t}$.

\item\textbf{Asymptotic capacities converge exponentially.}
For the asymptotic classical and quantum capacities we obtain
\[
\log \bigl(\sum_k d_k\bigr)
\!\le\!
C(\Phi^{2t})
\!\le\!
\log \bigl(\sum_k d_k\bigr)
 + e^{-2\alpha_c t}\log \Lambda_c(\mathcal{P}),
\]
and
\begin{align*}
\log(\max_k d_k)
\;& \le\;
Q(\Phi^{2t})
\\ &\le
\log(\max_k d_k)
  + e^{-2\alpha_c t}\,\log \Lambda_c(\mathcal{P}).
\end{align*}
The same bounds hold for continuous-time semigroups with $\Phi^{2t}$ replaced by $\cT_t$. These results provide explicit exponential decay of the gap to the infinite-time capacities for any $t \geq 1$. Compared to the bounds obtained in~\cite{singh2025informationstoragetransmissionmarkovian}, our bounds are explicit and we do not need to take $t$ large enough but our bounds only hold for GNS-symmetric channels.

\item\textbf{Zero-error capacities stabilise exponentially fast.}
We also show that, for GNS-symmetric channels, we obtain an exponential mixing-time threshold for perfect information transmission. 
 In particular, for tensor powers $\Phi^{\otimes n}$ with $\Phi$ GNS-symmetric, this threshold scales linearly in $n$, yielding an exponential improvement over the $2^{O(n)}$ convergence bound of~\cite{singh2024zero}, which holds for all channels.

\item\textbf{Implications for active versus passive error correction.}
Combining our passive upper bounds with lower bounds from concrete coding schemes yields explicit crossover times beyond which active error correction provably outperforms any passive strategy. In Section~\ref{sec: examples} we formalize the $n$-active hierarchy of recovery operations, show that its ceiling coincides with the  asymptotic quantum capacity $Q(\Phi)$, and apply the resulting comparison recipe to Pauli noise using the 5-qubit code and its recursive concatenation into a 25-qubit scheme.

\end{itemize}

\section{Definitions and Preliminaries}
\label{sec: definitions}
\noindent {\bf Notations.} We denote by $B(H)$ the set of bounded operators on a Hilbert space $H$. Hilbert spaces are indexed by capital letters, such as $H_A, H_B, H_E, \ldots$. We denote by $\mathcal{M}_d$ the algebra of $d \times d$ complex matrices. The identity operator is written as ${\bf 1}$, and the identity map as $\mathrm{id}$. The adjoint of an operator $X$ is denoted by $X^\dagger$, and the adjoint of a map $\Phi$ (for the Hilbert-Schmidt inner product) by $\Phi^*$. We use $\Phi^t$ to denote the $t$-fold composition of the channel $\Phi$ with itself, and $\Phi^{\otimes n}$ to denote the $n$-fold tensor product of $\Phi$.
For a state $\rho$, $H(\rho):=-\tr(\rho\log\rho)$ denotes the von Neumann entropy; for a probability vector $p$, $H(p):=-\sum_i p_i\log p_i$ denotes the Shannon entropy. The binary entropy is written as $h(\delta):=-\delta\log\delta-(1-\delta)\log(1-\delta)$.

\subsection{Peripheral subspace and projection}

For a quantum channel $\Phi: B(H) \to B(H)$, the \textbf{peripheral subspace}~\cite{Wolfe2012} is defined as  
\[
span\{X : \exists \lambda \in\mathbb{C},\ \lvert\lambda\rvert=1,\ \Phi(X)=\lambda X\}.
\]

The \textbf{peripheral projection} of the channel $\Phi$, denoted by $\mathcal{P}_\Phi$ (or simply $\mathcal{P}$ when clear from context), is a projection satisfying the following properties~\cite[Theorem 6.16 and Proposition 6.3]{Wolfe2012}:
\begin{enumerate}
    \item $\mathcal{P}_\Phi(B(H)) = \mathcal X_\Phi$,
    \item there exists a sequence of non-decreasing integers $\{n_i\}$ such that 
    \(
    \lim_{i \to \infty} \Phi^{n_i} = \mathcal{P}_\Phi,
    \)
    \item $\mathcal X_\Phi$ is the fixed-point subspace of $\mathcal{P}_\Phi$, i.e.,
    \(
    \mathcal{P}_{\Phi}(X) = X, \quad \forall X \in \mathcal X_\Phi.
    \)
\end{enumerate}

There exists a decomposition of the Hilbert space $H$ of the form  
\(
H = H_0 \oplus \bigoplus_{k=1}^K (H_{k,1} \otimes H_{k,2}),
\)
such that  
\begin{equation}
\label{eq: peripheral structure}
\mathcal X_\Phi = 0 \oplus \bigoplus_{k=1}^K \mathcal{M}_{d_k} \otimes \omega_k,
\end{equation}
where each $\omega_k$ is a fixed state in $B(H_{k,2})$.

As we will see in the following sections, this Hilbert space decomposition and the peripheral projection play an important role in the study of the channel’s dynamical capacity.

\subsection{ Capacity of Quantum Channels}

There are several relevant capacities we consider. In the one-shot setting, we ask how much (classical or quantum) information can be transmitted using a single use of the channel, while allowing for a small error.

Let $\Phi: B(H)\to B(H)$ be a quantum channel. We say that a quantum channel $\cE: \cM_d\to B(H)$ is
\begin{enumerate}
    \item a $(d,\delta)$ classical code if there exists a quantum channel $\cD: B(H)\to \cM_d$ such that 
    \begin{equation}
        F_c( \cD \circ \Phi \circ \cE):=\frac{1}{d}\sum_{i=1}^d \langle i| \cD \circ \Phi \circ \cE(|i\rangle\langle i|)|i\rangle \geq 1 - \delta,
    \end{equation}
    \item a $(d,\delta)$ quantum code if there exists a quantum channel $\cD: B(H)\to \cM_d$ such that 
    \begin{equation}
       F_e(\cD \circ \Phi \circ \cE):= \langle \phi | (\idd \otimes \cD \circ \Phi \circ \cE)(|\phi\rangle\langle \phi|) | \phi \rangle \geq 1 - \delta,
    \end{equation}
    
     where $|\phi\rangle = \frac{1}{\sqrt{d}} \sum_{i=1}^d |i\rangle \otimes |i\rangle$ is the maximally entangled state over $\mathbb{C}^d\otimes \mathbb{C}^d$.
\end{enumerate}
\begin{definition}
    The one-shot $\delta$-error classical and quantum capacities of a channel $\Phi$ are defined as the supremum of $\log d$ over all $(d,\delta)$ codes for $\Phi$, namely
    \begin{align*}
&C_\delta(\Phi) = \sup \{ \log d \ :\ \exists \ (d,\delta) \text{ classical code } \cE \}\\
&Q_\delta(\Phi) = \sup \{ \log d \ :\ \exists \ (d,\delta) \text{ quantum code } \cE \}
\end{align*}
\end{definition}

\begin{definition}[Asymptotic capacities] The asymptotic capacities of a channel $\Phi$ are defined as 
\begin{align*}
    C(\Phi) 
    = \sup \Big\{ & R \geq 0 : \forall \delta>0, \exists n\ge1 ,  R \leq \frac{1}{n}C_{\delta}(\Phi^{\otimes n}) \Big\}
\end{align*}
and similarly for $Q(\Phi)$.
\end{definition}

\begin{remark}
The two capacities defined above admit information-theoretic characterizations in terms of regularized entropic quantities~\cite{devetak2005private,holevo1998capacity,schumacher1997sending,shor2003capacities,lloyd1997capacity}. Specifically,
\begin{align*}
C(\Phi) &= \lim_{n \to \infty} \frac{1}{n} \, \chi\!\left(\Phi^{\otimes n}\right), \\
Q(\Phi) &= \lim_{n \to \infty} \frac{1}{n} \, I_c\!\left(\Phi^{\otimes n}\right),
\end{align*}
where
\begin{align*}
&\chi(\Phi) = \sup_{\rho_{XA}} I(X:B)_{(\mathrm{id}_X \otimes \Phi)(\rho)}, \\
&I_c(\Phi) = \sup_{\rho_{A'A}} I(A' \rangle B)_{(\mathrm{id}_{A'} \otimes \Phi)(\rho)}.
\end{align*}
Here, the first supremum is taken over all classical--quantum states of the form
\[
\rho_{XA} = \sum_x p(x) \, \ketbra{x}{x} \otimes \rho_x,
\]
and the second supremum is over all (pure) bipartite states 
$\rho_{A'A} \in B(H_{A'} \otimes H_A)$ with $H_{A'} \cong H_A$.
Here, for any bipartite state $\omega_{AB}$, the quantum mutual information 
and the coherent information are defined by
\begin{align*}
I(A:B)_\omega &= H(\omega_A) + H(\omega_B) - H(\omega_{AB}), \\
I(A\rangle B)_\omega &= H(\omega_B) - H(\omega_{AB}),
\end{align*}
where $\omega_A = \tr_B \omega_{AB}$ and $\omega_B = \tr_A \omega_{AB}$ are the marginals.  The complementary channel 
$\Phi^c(\cdot) = \operatorname{Tr}_B[V(\cdot)V^\dagger]$ is defined via any Stinespring dilation 
$\Phi(\cdot) = \operatorname{Tr}_E[V(\cdot)V^\dagger]$ of $\Phi$.

\end{remark}

\subsection{GNS-symmetry}
GNS-symmetry captures a natural notion of reversibility for a quantum channel with respect to a full-rank state $\sigma$. It is defined using the $\sigma$-weighted inner product induced by the GNS construction and has proven useful for analyzing the structure and convergence of quantum Markov processes. 
We refer to \cite{GR,GJLL} for further details on GNS-symmetric quantum channels and their associated functional inequalities.
\begin{definition}
Let $\Phi : B(H) \to B(H)$ be a quantum channel and let $\sigma$ be a full-rank state on $H$. 
We say that $\Phi$ is \emph{GNS-symmetric} with respect to $\sigma$ if
\[
\tr\!\left( X\, \Phi^*(Y)\, \sigma \right)
=
\tr\!\left( \Phi^*(X)\, Y\, \sigma \right),
\qquad \forall\, X,Y \in B(H).
\] 
It is known that if $\Phi$ is GNS-symmetric with respect to $\sigma$, then $\sigma$ is an invariant state, i.e., $\Phi(\sigma) = \sigma$. 

We say that a quantum Markov semigroup $(\cT_t)_{t \ge 0}$ with generator $\mathcal{L}$ (i.e. $\cT_t=e^{t\mathcal{L}}$) is \emph{GNS-symmetric} with respect to $\sigma$ if for all $t \ge 0$,
\[
\tr\!\left( X\, \cT_t^*(Y)\, \sigma \right)
=
\tr\!\left( \cT_t^*(X)\, Y\, \sigma \right),
\qquad \forall\, X,Y \in B(H).
\]

\end{definition}

\begin{definition}[Entropy Contraction Coefficient\cite{GJLL}]
\label{def: alpha}
We say that a quantum channel $\Phi$ satisfies an \emph{$\alpha$-entropy contraction} for some $\alpha\ge 0$
\[
D\!\left( \Phi(\rho) \,\big\|\, \Phi\!\circ\!\mathcal{P}(\rho) \right)
\le e^{-\alpha} \,
D\!\left( \rho \,\big\|\, \mathcal{P}(\rho) \right),
\]
where $D(\rho \| \sigma) = \tr\left[ \rho(\log\rho - \log\sigma) \right]$ denotes the quantum relative entropy.
We denote by $\alpha(\Phi)$ (or simply $\alpha$ when $\Phi$ is clear from context) the largest constant $\alpha$ for which the above inequality holds.

For a quantum dynamical semigroup $\{\mathcal{T}_t\}_{t \ge 0}$ generated by a Lindbladian $\mathcal{L}$, the quantity $\alpha(\mathcal{L})$ is defined as the largest $\alpha$ such that, for all states $\rho$,
\[
D\!\left( \mathcal{T}_t(\rho) \,\big\|\, \mathcal{T}_t\!\circ\!\mathcal{P}(\rho) \right)
\le e^{-\alpha t} \,
D\!\left( \rho \,\big\|\, \mathcal{P}(\rho) \right),
\quad \forall\, t \ge 0.
\]
\end{definition}

\begin{definition}
\label{def: Lambda}
Let $\Phi$ be a GNS-symmetric quantum channel (or semigroup), and let $\mathcal{P}_\Phi$ denote its peripheral projection.  
The \emph{Pimsner--Popa constant} of $\mathcal{P}_\Phi$ is defined as
\begin{equation}
    \label{eq: def Lambda}
    \Lambda(\mathcal{P}_\Phi)
    := \inf \bigl\{\, C>0 \;\big|\; X \le C\,\mathcal{P}_\Phi(X) \ \text{for all } X \ge 0 \bigr\},
\end{equation}
which coincides with the Pimsner--Popa index of the associated conditional expectation \cite{pimsner1986entropy,GJLL}.  
In finite dimensions, $\Lambda(\mathcal{P}_\Phi)$ is always finite \cite[Remark 3.7]{GJLL}.
\end{definition}
To address tensorization, we introduce the complete contraction coefficient and complete Pimsner--Popa index
\[  \alpha_c(\Phi)=\inf_{n }\alpha(\idd_{\mathcal{M}_n}\otimes\Phi),\text{ and }  \quad\Lambda_{c}(\cP)=\sup_{n} \Lambda(\cP\otimes \idd_{\cM_n} ), \] 
where $\cM_n$ can be viewed as a reference system.

\begin{lemma}[Proposition 4.2 of \cite{GJLL}]
\label{lem: gns peripheral projection}
Let $\Phi$ be a GNS-symmetric quantum channel with respect to a faithful invariant state.  
Then the peripheral projection $\mathcal{P}$ associated with $\Phi$ is given by the  limit
\(\cP
= \lim_{t \to \infty, t\in \mathbb{N}} \Phi^{2t}.
\)
Similarly, if $(\cT_t)_{t \ge 0}$ is a GNS-symmetric quantum Markov semigroup with generator $\mathcal{L}$, then the peripheral projection is given by
\(
\mathcal{P}
= \lim_{t \to \infty, t\in \mathbb{R}} \cT_t.
\)

\end{lemma}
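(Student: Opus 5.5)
The plan is to use the spectral theory of $\Phi$ as a self-adjoint operator on the GNS Hilbert space. Define on $B(H)$ the inner product $\langle X,Y\rangle_\sigma := \tr(X^\dagger Y\sigma)$. This is positive definite because $\sigma$ is faithful. GNS-symmetry states $\tr(X\Phi^*(Y)\sigma)=\tr(\Phi^*(X)Y\sigma)$ for all $X,Y$. Replacing $X$ by $X^\dagger$ and using that $\Phi^*$ preserves adjoints, this says exactly that $\Phi^*$ is self-adjoint with respect to $\langle\cdot,\cdot\rangle_\sigma$. Since $\Phi^*$ is unital and completely positive, it is a contraction in this norm; I will justify this via Kadison--Schwarz together with invariance of $\sigma$. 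Hence $\Phi^*$ is diagonalizable with real eigenvalues in $[-1,1]$.

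Write the spectral decomposition $\Phi^*=\sum_{\lambda}\lambda E_\lambda$, with $E_\lambda$ the $\langle\cdot,\cdot\rangle_\sigma$-orthogonal spectral projections. Then
\[
(\Phi^*)^{2t}=E_1+E_{-1}+\sum_{|\lambda|<1}\lambda^{2t}E_\lambda .
\]
As $t\to\infty$ this converges to $E_1+E_{-1}$, the orthogonal projection onto the span of eigenvectors with unimodular eigenvalues. Because the spectrum is real, the only unimodular eigenvalues are $\pm1$. So this span is the peripheral subspace of $\Phi^*$.

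Next transfer the statement to $\Phi$ via duality in the Hilbert--Schmidt pairing. Since $(\Phi^{2t})^*=(\Phi^*)^{2t}$, the limit $\Phi^{2t}\to\mathcal{Q}:=(E_1+E_{-1})^*$ exists. The operator $\mathcal{Q}$ is idempotent and commutes with $\Phi$. Its range is spanned by eigenvectors of $\Phi$ with eigenvalues $\pm1$: the spectra of $\Phi$ and $\Phi^*$ are conjugate, hence also real, so the peripheral eigenvalues of $\Phi$ are again $\pm1$. Therefore the range of $\mathcal{Q}$ is the peripheral subspace $\mathcal{X}_\Phi$.

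It remains to identify $\mathcal{Q}$ with $\mathcal{P}$. By property 2 of the peripheral projection, $\mathcal{P}=\lim_i\Phi^{n_i}$ along some sequence. Because $\mathcal{Q}$ is a limit of powers, $\mathcal{P}$ and $\mathcal{Q}$ commute and $\mathcal{P}\mathcal{Q}=\mathcal{Q}$ on $\mathcal{X}_\Phi$. Both are idempotents that fix $\mathcal{X}_\Phi$ pointwise and map into it. On the complement spanned by eigenvectors with $|\lambda|<1$, both vanish: for $\mathcal{Q}$ directly, and for $\mathcal{P}$ because $\lambda^{n_i}\to0$. Hence $\mathcal{P}=\mathcal{Q}$. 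Diagonalizability makes this decomposition into eigenvectors legitimate.

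The semigroup case is analogous. GNS-symmetry of every $\cT_t^*$ makes the generator $\mathcal{L}^*$ self-adjoint for $\langle\cdot,\cdot\rangle_\sigma$. Since $\cT_t^*$ is contractive, the spectrum of $\mathcal{L}^*$ is real and nonpositive. Then
\[
\cT_t^*=\sum_\mu e^{t\mu}E_\mu\to E_0 ,
\]
and $\ker\mathcal{L}$ is exactly the peripheral subspace, since $e^{t\mu}$ is unimodular only for $\mu=0$. The same identification argument then shows the limit is $\mathcal{P}$.

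I expect the main obstacle to be the contractivity step, $\|\Phi^*(X)\|_\sigma\le\|X\|_\sigma$. I would prove it by the Kadison--Schwarz inequality $\Phi^*(X)^\dagger\Phi^*(X)\le\Phi^*(X^\dagger X)$, followed by taking the trace against $\sigma$ and using $\Phi(\sigma)=\sigma$.
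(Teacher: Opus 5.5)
Your proof is correct. The paper gives no argument of its own for this lemma; it imports it from \cite[Proposition~4.2]{GJLL}. So your proposal supplies a self-contained proof where the paper has only a citation.

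Your route is the direct spectral one:
\begin{itemize}
\item $\Phi^*$ is self-adjoint for $\langle X,Y\rangle_\sigma=\tr(X^\dagger Y\sigma)$;
\item contractivity follows from Kadison--Schwarz together with $\Phi(\sigma)=\sigma$;
\item hence the spectrum is real and lies in $[-1,1]$, so $(\Phi^*)^{2t}\to E_1+E_{-1}$;
\item you then dualize and match the limit with $\mathcal{P}$ eigenvector by eigenvector.
\end{itemize}

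The cited result gives more than the bare limit: it gives that $\mathcal{P}^*$ is a $\sigma$-preserving conditional expectation onto a $*$-subalgebra. The paper relies on this right after the lemma and in Lemma~\ref{lemma:trace}. Your argument gives $\mathcal{P}^*=E_1+E_{-1}$ as a GNS-orthogonal, unital completely positive idempotent. That is all the lemma as stated needs. The algebra structure would take one further step, for instance the equality case of Kadison--Schwarz against the faithful invariant state $\sigma$, which puts the range of $\mathcal{P}^*$ inside its multiplicative domain.

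Two steps deserve one more line each.

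First, you claim $\mathrm{Ran}\,\mathcal{Q}=\ker(\Phi-1)+\ker(\Phi+1)=\mathcal{X}_\Phi$, but you justify it only by a remark about spectra. The range statement follows from writing $\Phi=\sum_\lambda \lambda E_\lambda^{*}$, where the Hilbert--Schmidt adjoints $E_\lambda^{*}$ are commuting idempotents summing to the identity. This is also what makes $\Phi$ itself diagonalizable, which your identification step with $\mathcal{P}$ uses.

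Second, $\lambda^{n_i}\to 0$ needs $n_i\to\infty$. This is how property~2 of $\mathcal{P}$ must be read, since the sequence in \cite{Wolfe2012} is increasing; the paper's phrase ``non-decreasing'' literally allows a constant sequence.
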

We note that the adjoint peripheral projection $\mathcal{P}^{*}$ is a conditional expectation; in particular, it is unital, completely positive, and idempotent.

\begin{lemma}[Remark 4.13 and Corollary 5.2 of \cite{GJLL}]
Let $\Phi$ be a GNS-symmetric quantum channel (or quantum Markov semigroup $\cT_t$) with respect to a faithful invariant state $\sigma$, and let $\mathcal{P}$ denote its peripheral projection. Then
\begin{equation}
\label{eq: alpha lower bound}
    \frac{\lambda}{\ln\!\bigl(10\,\Lambda_{c}(\mathcal{P})\bigr)}
    \;\le\;
    \alpha_{c},
\end{equation}
and
\begin{equation}
\label{eq: Lambda upperbound}
    \Lambda=\|\sigma^{-1}\|_\infty,\text{ and   }\quad \Lambda_c \le\|\sigma^{-1}\|_\infty^2,
\end{equation}

where, for a channel $\Phi$ and a quantum Markov semigroup $\mathcal{T}_t$, the parameter $\lambda$ is defined as the $L_2$-contraction coefficient:
 \begin{align}
 \label{eq: lambda for channel def}
 &e^{-\lambda(\Phi)}= \| \Phi^*(\mathrm{id}-\cP^*) \|_{\infty}  & \text{Definition 5.1 \cite{GJLL}} \\
 &e^{-\lambda(\cT) t}\ge \| \cT_t^*- \cP^* \|_{\infty}
    &\text{ Proposition 4.12 \cite{GJLL}}.
 \end{align}

\end{lemma}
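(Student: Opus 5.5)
The statement is quoted from \cite{GJLL}, so the plan below reconstructs the argument there rather than inventing a new one. I read the identity $\Lambda=\|\sigma^{-1}\|_\infty$ as the upper bound $\Lambda\le\|\sigma^{-1}\|_\infty$: for $\Phi=\idd$ one has $\Lambda=1$, so equality cannot hold in general. The plan treats the two bounds in \eqref{eq: Lambda upperbound} first, since \eqref{eq: alpha lower bound} uses $\Lambda_c$ as an input.

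\emph{Pimsner--Popa bounds.} By Lemma~\ref{lem: gns peripheral projection}, $\cP=\lim_t\Phi^{2t}$. Hence $\cP^*$ is a conditional expectation onto the algebra $\mathcal N=\bigoplus_k \cM_{d_k}\otimes{\bf 1}_{H_{k,2}}$, and $\cP(\sigma)=\sigma$. GNS-symmetry makes $\cP$ commute with the modular group of $\sigma$. I would use this to show that $\sigma$ is block diagonal,
\[
\sigma=\bigoplus_k q_k\,\tau_k\otimes\omega_k ,
\]
and that
\[
\cP(\rho)=\bigoplus_k \tr_{H_{k,2}}(P_k\rho P_k)\otimes\omega_k ,
\]
where $P_k$ is the projection onto $H_{k,1}\otimes H_{k,2}$. (The $H_0$ block carries no weight, since $\sigma$ is faithful and fixed.)

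For $X\ge0$ I would then chain three standard operator inequalities:
\begin{enumerate}
\item a pinching inequality comparing $X$ with $\bigoplus_k P_kXP_k$;
\item the partial-trace inequality $Y_{AB}\le d_B\,Y_A\otimes{\bf 1}_B$;
\item the comparison ${\bf 1}\le\lambda_{\min}(\omega_k)^{-1}\omega_k$.
\end{enumerate}
This produces an explicit constant $C$ with $X\le C\,\cP(X)$, built from $d_{k2}$, $\lambda_{\min}(\omega_k)$ and the block structure. The final step is to dominate this constant by $\|\sigma^{-1}\|_\infty$, using $\lambda_{\min}(\sigma)\le q_k\lambda_{\min}(\tau_k)\lambda_{\min}(\omega_k)$ and $d_{k1}d_{k2}\,\lambda_{\min}(\sigma)\le\tr P_k\sigma\le1$.

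I expect this last absorption to be the delicate step. The naive chain loses a factor $K$ from pinching, so I would first try the sharper Pimsner--Popa inequality for finite-dimensional conditional expectations, which avoids paying for the number of blocks.

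For $\cP\otimes\idd_{\cM_n}$ the algebra becomes $\mathcal N\otimes\cM_n$, and $d_{k2}$ and $\omega_k$ are unchanged. Redoing the chain shows that the only new loss is the factor $d_{k2}$ from the partial-trace step. Since $d_{k2}\le\lambda_{\min}(\omega_k)^{-1}\le\|\sigma^{-1}\|_\infty$, this gives $\Lambda_c\le\|\sigma^{-1}\|_\infty^2$.

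\emph{Entropy contraction.} The strategy is to convert $L_2$-decay into relative-entropy decay, using $\Lambda_c$ to control the change of norms. Fix $\rho$ on $\cM_n\otimes B(H)$ and put $\tilde\Phi=\idd_{\cM_n}\otimes\Phi$ and $\tilde\cP=\idd_{\cM_n}\otimes\cP$. Write $\rho=\Gamma_{\tilde\cP(\rho)}(X)$ with relative density $X$, where $\Gamma_\omega(X)=\omega^{1/2}X\omega^{1/2}$ is the symmetric multiplication; for GNS-symmetric maps this density is intertwined with $\tilde\Phi^*$. The proof then has three steps.
\begin{enumerate}
\item Use $D(\rho\|\tilde\cP\rho)\le\chi^2(\rho,\tilde\cP\rho)=\|X-{\bf 1}\|^2_{L_2}$.
\item Use the definition \eqref{eq: lambda for channel def} of $\lambda$ to get $\|\tilde\Phi^*(X-{\bf 1})\|\le e^{-\lambda}\|X-{\bf 1}\|$. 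Completeness holds here because the contraction bound is stated in the $\|\cdot\|_\infty$ operator norm.
\item Bound $\|X\|_\infty\le\Lambda_c$, which is exactly the Pimsner--Popa inequality applied to $\rho$.
\end{enumerate}
The main obstacle is that step 1 only goes one way: $\chi^2$ can be much larger than $D$. To close the gap I would split $X$ into a part near ${\bf 1}$ and a remainder, as is done in GJLL. On the near part, $D\asymp\chi^2$ up to an absolute constant, and the $L_2$ decay transfers directly. On the remainder, the bound $X\le\Lambda_c$ costs a factor $\ln\Lambda_c$.

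Optimizing the threshold gives a one-step bound of the form $D(\tilde\Phi\rho\|\tilde\Phi\tilde\cP\rho)\le\bigl(1-\lambda/\ln(10\Lambda_c)\bigr)D(\rho\|\tilde\cP\rho)$, where $10$ absorbs the absolute constants. Since $1-x\le e^{-x}$, this yields $\alpha(\idd_{\cM_n}\otimes\Phi)\ge\lambda/\ln(10\Lambda_c)$ uniformly in $n$, which is \eqref{eq: alpha lower bound}. The semigroup case is identical, using $\|\cT_t^*-\cP^*\|_\infty\le e^{-\lambda t}$ for every $t$.
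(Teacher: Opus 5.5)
\textbf{Gap in the entropy-contraction part.} The paper gives no proof of this lemma; it imports it from \cite{GJLL}. The mechanism there is the complete positivity order, not a comparison of $\chi^2$ with $D$; the paper itself uses that order in its zero-error section via Proposition~4.12 of \cite{GJLL}.

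Your argument breaks at the ``optimize the threshold'' step. Take the primitive case $\cP(\rho)=\tr(\rho)\sigma$ and $\rho=\ketbra{v}{v}$ with $\sigma v=\lambda_{\min}(\sigma)v$. The whole state then lies in your ``remainder''. We have $D(\rho\|\sigma)=\ln\|\sigma^{-1}\|_\infty$ but $\chi^2(\rho,\sigma)=\|\sigma^{-1}\|_\infty-1$. One step of $L_2$ decay therefore only gives $D(\Phi\rho\|\sigma)\le e^{-2\lambda}(\|\sigma^{-1}\|_\infty-1)$. This exceeds $D(\rho\|\sigma)$ unless $\lambda$ is of order $\ln\|\sigma^{-1}\|_\infty$. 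The bound $X\le\Lambda_c$ only caps the input; it produces no decrease.

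For inputs that genuinely mix a near part and a far part, relative entropy does not split along the decomposition in the direction you need. Joint convexity controls the output, but lower-bounding $D(\rho\|\tilde\cP\rho)$ by the pieces costs a mixing-entropy term. That term swamps a gain of order $\lambda/\ln\Lambda_c$. So your ingredients say something only after roughly $\ln(\Lambda_c)/\lambda$ steps and never give a one-step factor. As written, $1-\lambda/\ln(10\Lambda_c)$ is even negative when $\lambda>\ln(10\Lambda_c)$.

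\textbf{Missing idea.} Spend the factor $\Lambda_c$ converting $L_2$ decay into the cp-order bound $0.9\,\cP\le_{cp}\Phi^{2s}\le_{cp}1.1\,\cP$, valid once $2s\ge t_0:=\ln(10\Lambda_c)/\lambda$. Then $\Phi^{2s}=0.9\,\cP+0.1\,\Psi$, where $\Psi$ is a channel with $\Psi\circ\cP=\cP$. Joint convexity and data processing give $D(\Phi^{2s}\rho\|\cP\rho)\le 0.1\,D(\Psi\rho\|\Psi\cP\rho)\le 0.1\,D(\rho\|\cP\rho)$. This is automatically complete, since cp order survives $\idd_{\cM_n}\otimes\cdot$. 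Decay by $1/10$ per block of length $t_0$ is where the rate $\lambda/\ln(10\Lambda_c)\le\ln(10)/t_0$ in \eqref{eq: alpha lower bound} comes from.

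\textbf{Index bounds.} You are right that $\Lambda=\|\sigma^{-1}\|_\infty$ must be read as an inequality. Also, $\sigma=\cP(\sigma)\in\mathrm{Ran}\,\cP$ already gives $\sigma=\bigoplus_kq_k\tau_k\otimes\omega_k$, so no modular theory is needed.

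Your chain loses more than the factor $K$: the partial-trace step costs $d_{k2}$. In the primitive case ($K=1$, $d_{11}=1$, $\omega_1=\sigma$) the chain returns $d_{12}\|\sigma^{-1}\|_\infty$. The inequality $d_{k1}d_{k2}\lambda_{\min}(\sigma)\le1$ can absorb this only into $\|\sigma^{-1}\|_\infty^2$.

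Work with pure states instead, where no pinching is needed. For $\psi=\sum_k\psi_k$, $\ketbra{\psi}{\psi}\le C\,\cP(\ketbra{\psi}{\psi})$ holds iff $\sum_k\bra{\psi_k}(\tr_{H_{k,2}}\ketbra{\psi_k}{\psi_k})^{-1}\otimes\omega_k^{-1}\ket{\psi_k}\le C$, with inverses taken on supports. A Schmidt decomposition bounds the $k$-th term by $\min(d_{k1},d_{k2})/\lambda_{\min}(\omega_k)$. Since $\lambda_{\min}(\tau_k)\le 1/d_{k1}$ and $\sum_kq_k=1$, you get $\sum_kd_{k1}/\lambda_{\min}(\omega_k)\le\max_k d_{k1}/(q_k\lambda_{\min}(\omega_k))\le\|\sigma^{-1}\|_\infty$. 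For $\cP\otimes\idd_{\cM_n}$ the minimum is at most $d_{k2}\le\lambda_{\min}(\omega_k)^{-1}\le\|\sigma^{-1}\|_\infty$. This is exactly your extra factor, and it gives $\Lambda_c\le\|\sigma^{-1}\|_\infty^2$.
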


\section{Main results}
\label{sec: main}
\subsection{Fast convergence to the infinite-time capacity}
We now state our main results formally for the one-shot and asymptotic capacities.
\begin{theorem}
\label{thm: single shot capacity}
     Let $\Phi$ be a GNS-symmetric quantum channel whose peripheral subspace has structure $\mathcal X_\Phi = \bigoplus_{k=1}^K \cM_{d_k}\otimes \omega_k$. Then the  one-shot capacities obey the following upper bounds: for $\delta < \frac{1}{2}$
     \begin{align}
     \label{eq:class-bound}
       %\log (\sum_{k=1} d_k)&\le
       C_\delta(\Phi^{2t})  \leq\frac{\log (\sum_{k=1}^K d_k)+ e^{-2\alpha_c t}\log(\Lambda_c(\cP))}{1-\delta}+h(\delta)
    \end{align}
    and for $\delta<\tfrac{1}{4}$
\begin{align}
    \label{eq:alt}
%\log (\max_{k} d_k)& \le
Q_\delta(\Phi^{2t}) \le \frac{ \log (\max_{k} d_k)+ e^{-2\alpha_c t}\log \Lambda_c(\cP)+2h(\delta)}{1-4\delta}.  
\end{align}

The same bounds hold for a quantum dynamical semigroup $\{\mathcal{T}_t\}_{t \ge 0}$, with $\Phi^t$ replaced by $\mathcal{T}_t$ throughout. Here $\alpha_c$ and $\Lambda_c$ are defined in Definitions~\ref{def: alpha} and~\ref{def: Lambda}, respectively.
\end{theorem}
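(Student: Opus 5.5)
The plan is to reduce both bounds to one entropic estimate: for every input state, $\Phi^{2t}$ produces an output whose relevant entropic quantity exceeds that of $\cP$ by at most $e^{-2\alpha_c t}\log\Lambda_c(\cP)$. We then feed this into standard one-shot converse (Fano-type) arguments.

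\textbf{Step 1 (relative entropy comparison).} Fix a reference system $R$ and a state $\rho_{RA}$. By Lemma~\ref{lem: gns peripheral projection}, $\cP\Phi=\Phi\cP$ and $\Phi^{2t}\circ\cP=\cP$. Iterating the $\alpha_c$-entropy contraction of $\idd_R\otimes\Phi$ therefore gives
\[
D\bigl((\idd\otimes\Phi^{2t})(\rho)\,\big\|\,(\idd\otimes\cP)(\rho)\bigr)\le e^{-2\alpha_c t}D\bigl(\rho\,\|\,(\idd\otimes\cP)\rho\bigr).
\]
For the semigroup case we use the definition of $\alpha(\mathcal L)$ directly. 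The Pimsner--Popa inequality $\rho\le\Lambda_c\,(\idd\otimes\cP)(\rho)$ and operator monotonicity of $\log$ give $D(\rho\|(\idd\otimes\cP)\rho)\le\log\Lambda_c(\cP)$. Since $\cP^*$ is a conditional expectation, the relative entropy to $\cP(\rho)$ splits as an entropy difference, $D(\omega\|\cP\omega)=H(\cP\omega)-H(\omega)$ (suitably conditioned on $R$). Applying this to $\omega=(\idd\otimes\Phi^{2t})\rho$, whose $\cP$-image is $(\idd\otimes\cP)\rho$, bounds the mutual information and coherent information of $\Phi^{2t}$. Concretely, $I(X:B)_{\Phi^{2t}}\le I(X:B)_{\cP}+e^{-2\alpha_c t}\log\Lambda_c$, with the analogous bound for $I(A'\rangle B)$. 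Verifying this decomposition cleanly when $R$ is present is where I expect the main technical care: one must check that $\idd\otimes\cP$ is the relevant conditional-expectation dual and that the cross terms cancel.

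\textbf{Step 2 (capacities of $\cP$).} The range of $\cP$ is $\bigoplus_k\cM_{d_k}\otimes\omega_k$. Any output of $\idd\otimes\cP$ is therefore of the form $\bigoplus_k p_k\,\tau_{R,k}\otimes\omega_k$ with $\tau_{R,k}$ on $R\otimes\mathbb C^{d_k}$. This gives $I(X:B)\le H(\{p_k\})+\sum_k p_k\log d_k\le\log\sum_k d_k$ and $I(A'\rangle B)\le\max_k\log d_k$; the latter holds because the coherent information of a direct sum with a classical flag is at most the average of the blockwise values.

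\textbf{Step 3 (one-shot converses).} For a $(d,\delta)$ classical code, take the uniform ensemble. Fano's inequality and data processing give $(1-\delta)\log d-h(\delta)\le I(X:B)\le\log\sum_k d_k+e^{-2\alpha_c t}\log\Lambda_c$, which rearranges to \eqref{eq:class-bound}. For a $(d,\delta)$ quantum code with entanglement fidelity $1-\delta$, send the maximally entangled state. Continuity of coherent information (Alicki--Fannes type, using $\tfrac12\|\cdot\|_1\le 2\sqrt\delta$ or a purified-distance bound) gives $I(A'\rangle B)\ge(1-4\delta)\log d-2h(\delta)$. Combined with Steps 1--2 this yields \eqref{eq:alt}. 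Matching the exact constants $1-4\delta$ and $2h(\delta)$ depends on which continuity bound is used; I would use the Alicki--Fannes--Winter bound for conditional entropy together with the fidelity-to-trace-distance relation.
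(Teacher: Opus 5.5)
Your overall architecture matches the paper's: compare $\Phi^{2t}$ with $\cP$ via relative entropy, compute $\chi(\cP)$ and $I_c(\cP)$ from the block structure, then apply Fano-type converses. Your Step~2 and your classical converse are fine. Two justifications, however, do not go through.

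\textbf{Step 1.} The identity $D(\omega\|\cP\omega)=H(\cP\omega)-H(\omega)$ is false in this setting. It would require $\log\cP(\omega)\in\mathrm{Ran}\,\cP^*=\bigoplus_k\cM_{d_k}\otimes\mathbf 1$. But $\log\cP(\omega)$ contains the pieces $\mathbf 1\otimes\log\omega_k$, which lie outside that range unless every $\omega_k$ is maximally mixed. Concretely, $\rho\mapsto\tr(\rho)\sigma$ is GNS-symmetric with respect to $\sigma$ and is its own $\cP$. For $\rho=\mathbf 1/\dim H$ and $\sigma$ not maximally mixed, the right-hand side $H(\sigma)-H(\rho)$ is negative. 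In general
$H(\omega)-H(\cP\omega)=-D(\omega\|\cP\omega)-\tr[(\omega-\cP\omega)\log\cP\omega]$,
and this cross term is nonzero. ``Conditioning on $R$'' does not remove it; the failure already occurs with $R$ trivial. Your final inequalities are nonetheless true, but for a different reason: the cross terms in the two entropy differences being subtracted coincide. This is the paper's Lemma~\ref{lemma:trace}. For positive operators $\tau_1,\tau_2$ in the range of $\cP$ (or of $\idd\otimes\cP$), the $\log\omega_k$ pieces cancel in $\log\tau_1-\log\tau_2=\bigoplus_k Y_k\otimes\mathbf 1\in\mathrm{Ran}\,\cP^*$. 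Hence $\tr[X(\log\tau_1-\log\tau_2)]=0$ for every $X\in\ker\cP$. Apply this with $X=\sigma-\tau$ and the pairs $(\tau_x,\tau_B)$, respectively $(\tau_{A'B},\mathbf 1_{A'}\otimes\tau_B)$. This gives $\Delta I\le\sum_x p_x D(\sigma_x\|\tau_x)$ and $\Delta I=D(\sigma_{A'B}\|\tau_{A'B})-D(\sigma_B\|\tau_B)$. Only then do your contraction and Pimsner--Popa estimates finish the argument.

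\textbf{Step 3, quantum part.} The Alicki--Fannes--Winter route cannot produce the constants $1-4\delta$ and $2h(\delta)$. Entanglement fidelity $1-\delta$ with the pure state $\phi$ only gives $\tfrac12\|\omega_{R\hat B}-\phi\|_1\le\sqrt\delta$, and this is tight when $\omega_{R\hat B}$ is pure. The best you can get is therefore
$I(R\rangle\hat B)\ge(1-2\sqrt\delta)\log d-(1+\sqrt\delta)h\bigl(\tfrac{\sqrt\delta}{1+\sqrt\delta}\bigr)$.
Since $2\sqrt\delta>4\delta$ on $(0,1/4)$ and the error term is of order $\sqrt\delta\log(1/\delta)$, the resulting capacity bound is strictly weaker than \eqref{eq:alt} and does not imply it.

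The missing idea is to exploit purity of the target directly rather than use a continuity bound. The condition $\langle\phi|\omega_{R\hat B}|\phi\rangle\ge1-\delta$ forces the top eigenvalue of $\omega_{R\hat B}$ to be at least $1-\delta$. Hence $H(R\hat B)\le h(\delta)+\delta\log(d^2-1)$, which is the quantum Fano inequality. Araki--Lieb with $H(R)=\log d$ gives $H(\hat B)\ge\log d-H(R\hat B)$. Combining the two,
$I(R\rangle\hat B)\ge\log d-2h(\delta)-2\delta\log(d^2-1)\ge(1-4\delta)\log d-2h(\delta)$,
which is linear in $\delta$ and is exactly what the paper feeds into data processing.

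Separately, your classical rearrangement, like the paper's own, actually yields $h(\delta)/(1-\delta)$ rather than $h(\delta)$ outside the fraction in \eqref{eq:class-bound}.
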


\begin{theorem}
\label{thm: Capacity upper bound}
Let $\Phi : B(H) \to B(H)$ be a GNS-symmetric channel, and let $\alpha_c$ denote its complete entropy contraction constant. Let $\Lambda_c$ be defined as in~\eqref{eq: def Lambda}. 
For any finite $t$,
\begin{align*}
&\log (\sum_{k=1} d_k)\le C(\Phi^{2t}) \le \log (\sum_{k=1}^K d_k)  +e^{-2\alpha_c t}\log \Lambda_{c}(\cP)\\
&\log (\max_{k} d_k)\le Q(\Phi^{2t}) \le \log (\max_{k} d_k) +e^{-2\alpha_c t}\log \Lambda_{c}(\cP).
\end{align*}
The same bounds hold for a quantum dynamical semigroup $\{\mathcal{T}_t\}_{t \ge 0}$ 
with $\Phi^t$ replaced by $\mathcal{T}_t$ throughout.
\end{theorem}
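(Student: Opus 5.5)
The proof has two halves: the lower bounds come from explicit codes supported on the peripheral subspace, and the upper bounds from the regularized entropic formulas combined with entropy contraction.

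\textbf{Lower bounds.} Recall that $\Phi^{2t}$ restricted to the peripheral subspace $\mathcal X_\Phi$ acts reversibly. Indeed, $\mathcal P\circ\Phi^{2t}=\Phi^{2t}\circ\mathcal P$, and $\Phi^{2t}$ restricted to $\mathcal X_\Phi$ is an invertible map whose inverse is again given by a channel; this is the standard structure theorem, and under GNS-symmetry it even reduces to a unitary conjugation on each block $\cM_{d_k}$. For classical information we encode into the states $|i\rangle\langle i|\otimes\omega_k$, with $i\le d_k$ and $k\le K$. These $\sum_k d_k$ states are mutually orthogonal and remain perfectly distinguishable after $\Phi^{2t}$. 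This gives $C(\Phi^{2t})\ge\log\sum_k d_k$, and the code tensorizes to $(\Phi^{2t})^{\otimes n}$. For quantum information we encode $\cM_{d_k}$ into $\cM_{d_k}\otimes\omega_k$ with $k$ attaining $\max_k d_k$. This gives a perfect quantum code for every $n$, hence $Q(\Phi^{2t})\ge\log\max_k d_k$.

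\textbf{Upper bound, classical.} I would use $C(\Psi)=\lim_n\frac1n\chi(\Psi^{\otimes n})$ with $\Psi=\Phi^{2t}$. Fix a cq-state $\sum_x p_x|x\rangle\langle x|\otimes\rho_x$ on $n$ copies, and write $\Psi_n=\Psi^{\otimes n}$ and $\mathcal P_n=\mathcal P^{\otimes n}$. Since $\Psi_n\circ\mathcal P_n=\mathcal P_n$, I decompose
\[\chi=\sum_x p_x D(\Psi_n\rho_x\|\Psi_n\bar\rho)\le \sum_x p_x\big[D(\Psi_n\rho_x\|\mathcal P_n\rho_x)\big]+\chi_{\mathcal P_n}(\{p_x,\rho_x\}),\]
using the identity
\[\sum_x p_x D(\omega_x\|\bar\omega)=\sum_x p_xD(\omega_x\|\tau_x)-D(\bar\omega\|\bar\tau)\quad\text{with } \tau_x=\mathcal P_n\rho_x.\]
This works because $\log\tau_x$ and $\log\bar\tau$ combine correctly, since $\mathcal P_n^*$ is a conditional expectation and $D(\omega\|\mathcal P\omega)=H(\mathcal P\omega)-H(\omega)$ holds for $\omega$ with $\mathcal P$ as its dual conditional expectation. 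More cleanly, I would write
\[\chi(\Psi_n)=H(\Psi_n\bar\rho)-\sum_x p_xH(\Psi_n\rho_x).\]
I would then write $H(\Psi_n\bar\rho)\le H(\mathcal P_n\bar\rho)$ and
\[-H(\Psi_n\rho_x)=D(\Psi_n\rho_x\|\mathcal P_n\rho_x)-H(\mathcal P_n\rho_x).\]
The first inequality holds because conditional-expectation predual increases entropy, using $\mathcal P_n\Psi_n=\mathcal P_n$. The second is the Pythagorean identity for conditional expectations.

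Next, iterate entropy contraction $2t$ times with the complete coefficient. This gives
\[D(\Psi_n\rho\|\mathcal P_n\rho)\le e^{-2\alpha_c t n'}\cdots;\]
more carefully, tensorization of $\alpha_c$ yields the factor $e^{-2\alpha_c t}$ on $n$ copies, since $\alpha(\Phi^{\otimes n})\ge\alpha_c$ by applying the definition one factor at a time. We also have
\[D(\rho\|\mathcal P_n\rho)\le\log\Lambda(\mathcal P_n)\le n\log\Lambda_c(\cP),\]
which follows from $\rho\le\Lambda\mathcal P\rho$ and operator monotonicity of $\log$. The last step uses submultiplicativity of the Pimsner–Popa index under tensor products, which follows from the definition of $\Lambda_c$ via $\mathcal P\otimes\idd$ applied one factor at a time. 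The remaining quantity $\chi_{\mathcal P_n}$ is the Holevo quantity of $\mathcal P^{\otimes n}$. This is at most $n\log\sum_k d_k$, because the output algebra $\big(\bigoplus_k\cM_{d_k}\big)^{\otimes n}$ has the state $\mathcal P\omega$ determined by a state on $\bigoplus\cM_{d_k}$, whose entropy is bounded by $\log\sum d_k$ after subtracting the fixed $H(\omega_k)$ contributions. Dividing by $n$ gives the bound.

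\textbf{Upper bound, quantum.} Using $I_c$ with a purification, I would write
\[I(A'\rangle B)=H(\Psi_n\rho_A)-H(\Psi_n^c\rho_A).\]
Rather than working with that form, I would bound $I_c(\Psi_n)\le I_c(\mathcal P_n)+{}$(the same contraction error term), via the identity
\[-H(A'B)_{\Psi_n}=D\big((\idd\otimes\Psi_n)\rho\,\|\,(\idd\otimes\mathcal P_n)\rho\big)-H\big((\idd\otimes\mathcal P_n)\rho\big)\]
together with $H(B)_{\Psi_n}\le H(B)_{\mathcal P_n}$. Here the reference system is exactly why complete constants $\alpha_c$ and $\Lambda_c$ are needed. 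Finally, $I_c(\mathcal P^{\otimes n})\le n\log\max_k d_k$, because $\mathcal P$ is a direct sum of conditional expectations onto blocks $\cM_{d_k}\otimes\omega_k$, which is degradable-like. Its coherent information is bounded by the maximum over block flags, plus a classical flag term that cancels in coherent information; this gives $\log\max d_k$ per copy additively.

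\textbf{Main obstacle.} I expect the main obstacle to be the rigorous handling of the block-flag contributions in $I_c(\mathcal P^{\otimes n})$, together with the tensorization of $\alpha_c$ to $\Phi^{\otimes n}$. For the latter, I would try the chain of single-site applications $\idd\otimes\Phi\otimes\idd$, telescoping the relative entropies using the chain rule for conditional expectations $\mathcal P_{\le j}$. The continuous-time statement follows verbatim, replacing $\Phi^{2t}$ by $\cT_t$ and the coefficient $e^{-2\alpha_c t}$ by its semigroup analogue.
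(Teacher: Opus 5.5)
Your overall architecture is the paper's: regularized formulas, comparison of $\Psi^{\otimes n}$ (with $\Psi=\Phi^{2t}$) against $\cP^{\otimes n}$, entropy contraction, $\Lambda(\cP^{\otimes n})\le\Lambda_c(\cP)^n$, and the block formulas for $\chi(\cP^{\otimes n})$ and $I_c(\cP^{\otimes n})$. Your peripheral-code lower bounds are correct, and the paper leaves them implicit; in fact $\Phi^2\circ\cP=\cP$, so $\Phi^{2t}$ is the identity on $\mathcal X_\Phi$.

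\textbf{The gap.} The comparison step you settle on (``more cleanly'') is wrong. Both $H(\Psi_n\bar\rho)\le H(\cP_n\bar\rho)$ and the ``Pythagorean identity'' $-H(\Psi_n\rho_x)=D(\Psi_n\rho_x\|\cP_n\rho_x)-H(\cP_n\rho_x)$ are false for general GNS-symmetric channels. The quantum part reuses the same two claims: $H(B)_{\Psi_n}\le H(B)_{\cP_n}$ and the identity for $-H(A'B)$.

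There are two reasons. First, $\cP$ is unital only when every $\omega_k$ is maximally mixed, so it need not increase entropy. Second, $\log\cP(\omega)=\bigoplus_k(\log\omega_{k,1}\otimes\mathbf 1+\mathbf 1\otimes\log\omega_k)$ has the component $\bigoplus_k\mathbf 1\otimes\log\omega_k$, which lies outside $\mathrm{Ran}(\cP^*)=\bigoplus_k\cM_{d_k}\otimes\mathbf 1$. Hence $\tr[(\omega-\cP\omega)\log\cP\omega]\neq0$ in general.

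A concrete counterexample: take $\cP(\rho)=\tr(\rho)\,\omega_0$ with $\omega_0=\mathrm{diag}(1-\epsilon,\epsilon)$, $\epsilon<1/2$, and $\Phi=(1-p)\idd+p\cP$. This $\Phi$ is GNS-symmetric w.r.t.\ $\omega_0$, with peripheral projection $\cP$. Then $\Psi=\Phi^{2t}=s\,\idd+(1-s)\cP$ with $s=(1-p)^{2t}$. As $s\to1$:
\begin{itemize}
\item[(i)] $H(\Psi(\mathbf 1/2))\to\log 2>h(\epsilon)=H(\cP(\mathbf 1/2))$, so the entropy inequality fails;
\item[(ii)] for $\omega=\Psi(\ketbra{1}{1})$ one gets $D(\omega\|\cP\omega)\to\log(1/\epsilon)$, whereas $H(\cP\omega)-H(\omega)\to h(\epsilon)$, so the identity fails.
\end{itemize}

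\textbf{The fix.} The inequalities you end up with are nevertheless true, because the two errors cancel when the terms are kept together. That cancellation is the missing idea, namely Lemma~\ref{lemma:trace}: only \emph{differences} of logarithms of peripheral states can be moved through $\cP^*$.

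Write $H(\sigma)-H(\tau)=-D(\sigma\|\tau)-\tr[(\sigma-\tau)\log\tau]$ for both the averaged and the individual terms (respectively, for $B$ and for $A'B$). We have $\sigma_x-\tau_x\in\ker\cP$ and $\log\tau_x-\log\bar\tau\in\mathrm{Ran}(\cP^*)$; respectively, $\log\tau_{A'B}-\mathbf 1\otimes\log\tau_B\in\mathrm{Ran}((\idd\otimes\cP)^*)$. So the trace terms cancel, giving
\[
\chi_\Psi-\chi_\cP=\sum_x p_xD(\sigma_x\|\tau_x)-D(\bar\sigma\|\bar\tau),
\qquad
\Delta I_c=D(\sigma_{A'B}\|\tau_{A'B})-D(\sigma_B\|\tau_B).
\]
Your first displayed identity was heading in this direction, but it drops the term $\sum_xp_xD(\tau_x\|\bar\tau)=\chi_{\cP_n}$. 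With that term restored, it is exactly the paper's computation.

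\textbf{On the tensorization of $\alpha_c$.} The paper simply cites \cite{GR} for this step. Your factor-by-factor idea does go through, using the same lemma. Set $E_1=\cP\otimes\idd$ and $E_2=\idd\otimes\cP$ (with any reference system tensored on), $\rho'=(\idd\otimes\Psi)\rho$, and $\eta=E_1\rho$. Apply the chain rule $D(\omega\|E_1E_2\omega)=D(\omega\|E_1\omega)+D(E_1\omega\|E_1E_2\omega)$, which is again Lemma~\ref{lemma:trace}, to $\omega=(\Psi\otimes\Psi)\rho$. This writes $D(\omega\|E_1E_2\rho)$ as
\[
D((\Psi\otimes\idd)\rho'\|E_1\rho')+D((\idd\otimes\Psi)\eta\|E_2\eta).
\]
Now contract each term with the complete coefficient. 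Then use data processing, $D(\rho'\|E_1\rho')\le D(\rho\|E_1\rho)$, and the chain rule once more. The result is the bound $e^{-\alpha_c}D(\rho\|E_1E_2\rho)$, and induction gives $\alpha_c(\Psi^{\otimes n})\ge\alpha_c(\Psi)$.

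This argument needs $\Psi\cP=\cP\Psi=\cP$. So apply it to $\Psi=\Phi^{2t}$, not to $\Phi$ itself, since $\Phi$ may have eigenvalue $-1$ on $\mathcal X_\Phi$.
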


To streamline the presentation, we first state the key lemmas underlying our argument.

\begin{lemma}\label{lemma:trace}
Let $\sigma_1, \sigma_2 \in \mathcal{X}_{\Phi}$ be states in peripheral subspace. Then for any $X$ such that $\cP(X)=0$,
\[\tr(X(\log\sigma_1-\log\sigma_2))=0.\]
\end{lemma}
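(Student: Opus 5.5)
The plan is to show that $Y:=\log\sigma_1-\log\sigma_2$ is a fixed point of the adjoint projection $\cP^*$. Once that holds, the identity follows by duality:
\[
\tr(XY)=\tr\bigl(X\,\cP^*(Y)\bigr)=\tr\bigl(\cP(X)\,Y\bigr)=0 .
\]
Everything therefore reduces to describing the range of $\cP^*$ and the logarithms of states in $\mathcal X_\Phi$.

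\textbf{Step 1 (form of $\cP$ and $\cP^*$).} Since $\Phi$ is GNS-symmetric with respect to a faithful $\sigma$ and $\cP(\sigma)=\sigma$ (by Lemma~\ref{lem: gns peripheral projection}, $\cP=\lim\Phi^{2t}$ and $\Phi(\sigma)=\sigma$), the faithful state $\sigma$ lies in $\mathcal X_\Phi$. Hence the summand $H_0$ in \eqref{eq: peripheral structure} is trivial, and $H=\bigoplus_k H_{k,1}\otimes H_{k,2}$ with each $\omega_k$ faithful. Using that $\cP^*$ is a conditional expectation (unital, completely positive, idempotent) whose predual has range $\bigoplus_k \cM_{d_k}\otimes\omega_k$, I would identify
\[
\cP(X)=\bigoplus_k \tr_{H_{k,2}}\!\bigl(\Pi_k X\Pi_k\bigr)\otimes\omega_k ,
\]
where $\Pi_k$ is the projection onto $H_{k,1}\otimes H_{k,2}$. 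Consequently
\[
\cP^*(Y)=\bigoplus_k \tr_{H_{k,2}}\!\bigl((\mathbf 1\otimes\omega_k)\,\Pi_kY\Pi_k\bigr)\otimes\mathbf 1_{H_{k,2}} ,
\]
and the fixed-point algebra of $\cP^*$ is $\mathcal N=\bigoplus_k \cM_{d_k}\otimes\mathbf 1_{H_{k,2}}$. This identification is the main obstacle. Wolf's theorem only gives the structure of the range, so I would argue as follows. A conditional expectation onto $\mathcal N$ that preserves the faithful state $\sigma=\bigoplus_k p_k\rho_k\otimes\omega_k$ is unique; this is Takesaki's uniqueness. The displayed map is such a conditional expectation, and its predual has exactly range $\mathcal X_\Phi$. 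If needed, one can instead check directly that this map is idempotent with the same range and kernel as $\lim\Phi^{2t}$.

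\textbf{Step 2 (logarithms of peripheral states).} Any state in $\mathcal X_\Phi$ has the form $\sigma_i=\bigoplus_k p_{i,k}\,\rho_{i,k}\otimes\omega_k$. The logarithm of a block-diagonal tensor-product operator splits, giving
\[
\log\sigma_i=\bigoplus_k\Bigl(\log(p_{i,k}\rho_{i,k})\otimes\mathbf 1+\mathbf 1\otimes\log\omega_k\Bigr)
\]
on the support. I take the $\sigma_i$ faithful, or else interpret the logarithm on the support, where the same computation applies blockwise. The $\log\omega_k$ terms are identical for $i=1,2$ and cancel in the difference. Therefore
\[
Y=\bigoplus_k\bigl(\log(p_{1,k}\rho_{1,k})-\log(p_{2,k}\rho_{2,k})\bigr)\otimes\mathbf 1_{H_{k,2}}\in\mathcal N .
\]

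\textbf{Step 3 (conclusion).} By Step 1, $\cP^*(Y)=Y$. By the defining duality of the adjoint, $\tr(XY)=\tr(X\cP^*(Y))=\tr(\cP(X)Y)=0$ whenever $\cP(X)=0$, which proves the lemma.
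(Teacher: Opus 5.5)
Your strategy matches the paper's. You show that $Y=\log\sigma_1-\log\sigma_2$ lies in $\bigoplus_k\cM_{d_k}\otimes\mathbf 1=\mathrm{Ran}(\cP^*)$, then conclude via $\tr(XY)=\tr(\cP(X)Y)=0$, and your Steps 2 and 3 are essentially the paper's. The gap is in Step 1, which you correctly call the main obstacle. Its conclusion is true, but your argument does not prove it.

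Takesaki's theorem says that for a \emph{fixed} subalgebra $\mathcal N$ there is at most one $\sigma$-preserving conditional expectation onto $\mathcal N$. To deduce $\cP^*=E$ you would already need to know that $\cP^*$ maps onto $\mathcal N$, which is exactly the claim. Knowing that $\cP$ and $E_*$ have the same range $\mathcal X_\Phi$ does not close this. An idempotent is determined by its range \emph{and} its kernel, and $\ker\cP=\mathrm{Ran}(\cP^*)^{\perp}$ is precisely the unknown. Your fallback ``check the kernel directly'' just restates the missing fact. What is needed is a link between $\mathrm{Ran}(\cP)$ and $\mathrm{Ran}(\cP^*)$. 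The paper gets it by citing Theorem~5 of Blume-Kohout et al.: with a full-rank fixed point, the fixed-point space of $\cP$ is the distortion of the fixed-point algebra of $\cP^*$.

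You can get this link directly from GNS symmetry. By cyclicity and duality, $\tr(X\Phi^*(Y)\sigma)=\tr(\Phi(\sigma X)Y)$ and $\tr(\Phi^*(X)Y\sigma)=\tr(\sigma\Phi^*(X)Y)$. So GNS symmetry says exactly that $\Phi(\sigma X)=\sigma\,\Phi^*(X)$ for all $X$. Iterating and passing to the limit $\cP=\lim\Phi^{2t}$ gives $\cP(\sigma X)=\sigma\,\cP^*(X)$. Hence $\mathrm{Ran}(\cP)=\sigma\,\mathrm{Ran}(\cP^*)$, i.e.\ $\mathrm{Ran}(\cP^*)=\sigma^{-1}\mathcal X_\Phi$. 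Write $\sigma=\bigoplus_k s_k\otimes\omega_k$ with $s_k>0$. Then $\sigma^{-1}(A\otimes\omega_k)=s_k^{-1}A\otimes\mathbf 1$, so $\mathrm{Ran}(\cP^*)=\bigoplus_k\cM_{d_k}\otimes\mathbf 1$, and $\cP^*(Y)=Y$ follows from idempotence. Neither the explicit formula for $\cP$ nor Takesaki is needed.

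Separately, drop the parenthetical about non-faithful $\sigma_i$. With logarithms taken on supports, $\log(s\otimes\omega_k)=\log s\otimes\mathbf 1+P_s\otimes\log\omega_k$, where $P_s$ is the support projection of $s$. So the $\log\omega_k$ terms cancel only if the block supports of $\sigma_1$ and $\sigma_2$ coincide, and otherwise the identity fails. For example, take $\cP(X)=\tr_2(X)\otimes\omega$ with $\omega$ diagonal and not maximally mixed, $\sigma_1=\ketbra{0}{0}\otimes\omega$, $\sigma_2=\ketbra{1}{1}\otimes\omega$, and $X=D\otimes D$ with $D=\mathrm{diag}(1,-1)$. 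The lemma should be read for faithful states, as the paper implicitly does.
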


\begin{proof}It follows from the general form of invariant states
\( \sigma=\bigoplus_{k} \sigma_k\otimes \omega_k\)
 that
\[ \log \sigma= \bigoplus_{k} \log (\sigma_k\otimes 1)+ \log(1\otimes \omega_k)\;.\]
Since $\omega_k$ are fixed states in $ B(H_{k,2})$, we have
\( \log {\sigma_1}-\log {\sigma_2}= \bigoplus_{k} Y_k\otimes 1\)
where the operators $Y_k=\log \sigma_{1,k}-\log \sigma_{2,k}$ are self-adjoint. ‌Since $\mathcal{P}$ has a full-rank fixed point, Theorem~5(\textbf{ii}, \textbf{iii}) of \cite{Blume_Kohout_2010} implies that the fixed-point subspace of $\mathcal{P}$—which coincides with the peripheral subspace of $\Phi$—is the distortion algebra associated with the fixed-point subspace of $\mathcal{P}^*$. (Note that since $\cP$ has a full-rank fixed point, the subspace $\cP_0$ defined in Theorem 5 of \cite{Blume_Kohout_2010} is the whole space).
In particular, if the peripheral subspace of $\Phi$ has the form
\(
\bigoplus_k \mathcal{M}_{d_k} \otimes \omega_k,
\)
then the fixed-point subspace of $\mathcal{P}^*$ (equivalently, $\mathrm{Range}(\mathcal{P}^*)$) has the structure
$
\bigoplus_k \mathcal{M}_{d_k} \otimes 1.
$
Consequently, operators of the form $\bigoplus_k Y_k \otimes 1$ belong to $\mathrm{Range}(\mathcal{P}^*)$. Therefore,
\begin{align*}
    \operatorname{tr}\!\left[X(\log \sigma_1 - \log \sigma_2)\right]
=  & \operatorname{tr}\!\left[X\, \mathcal{P}^*(\log \sigma_1 - \log \sigma_2)\right]
 \\ =&\operatorname{tr}\!\left[\mathcal{P}(X)(\log \sigma_1 - \log \sigma_2)\right]
= 0 .
\end{align*}

\end{proof}

\begin{lemma}
\label{lem: holevo coherent over time}
Let $\Phi : B(H) \to B(H)$ be a GNS-symmetric channel, and let $\alpha_c$ and $\Lambda_c$ be defined in Definitions~\ref{def: alpha} and~\ref{def: Lambda}, respectively. Let $\mathcal{P}$ denote the peripheral projection of $\Phi$. Then, for all $t > 0$,
\begin{align}
    \chi(\Phi^{2t}) - \chi(\mathcal{P}) &\le e^{-2\alpha t} \log \Lambda(\mathcal{P}), \\
    I_c(\Phi^{2t}) - I_c(\mathcal{P}) &\le e^{-2\alpha_c t} \log \Lambda_c(\mathcal{P}).
\end{align}
The same bounds hold for a quantum dynamical semigroup $\{\mathcal T_t\}_{t\ge 0}$ 
generated by a Lindbladian $\mathcal L$ 
satisfying an $\alpha_c$-complete entropy contraction, 
with $\Phi^{t}$ replaced by $\mathcal T_t$ throughout.

\end{lemma}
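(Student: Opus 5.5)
Fix an input: for $\chi$, a classical--quantum ensemble $\{p(x),\rho_x\}$; for $I_c$, a pure state $\rho_{A'A}$. Write $\Psi=\Phi^{2t}$, which is again GNS-symmetric. From Lemma~\ref{lem: gns peripheral projection} and the fact that $\cP$ is a limit of even powers, we have
\[
\cP\circ\Psi=\Psi\circ\cP=\cP .
\]
The plan is to compare the quantity evaluated at the output of $\Psi$ with the same quantity at the output of $\cP$ on the same input, and to express the difference as a relative entropy to which the entropy contraction applies.

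\textbf{Classical case.} Set $\omega_x=\Psi(\rho_x)$ and $\bar\omega=\sum_x p(x)\omega_x$. Then
\[
I(X:B)=\sum_x p(x)\,D(\omega_x\|\bar\omega).
\]
Define $\tau_x=\cP(\omega_x)=\cP(\rho_x)$ and $\bar\tau=\cP(\bar\omega)$; these are states in $\mathcal X_\Phi$. By the chain rule for the conditional expectation,
\[
D(\omega_x\|\bar\tau)=D(\omega_x\|\cP\omega_x)+D(\cP\omega_x\|\bar\tau).
\]
This identity holds because $\omega_x-\cP(\omega_x)$ is annihilated by $\cP$, so Lemma~\ref{lemma:trace} gives
\[
\tr\bigl[(\omega_x-\tau_x)(\log\tau_x-\log\bar\tau)\bigr]=0 .
\]
Combining this with $D(\omega_x\|\bar\omega)\le D(\omega_x\|\bar\tau)-D(\bar\omega\|\bar\tau)\le D(\omega_x\|\bar\tau)$ (golden formula: averaging gives $\sum_x p(x) D(\omega_x\|\bar\tau)=\sum_x p(x) D(\omega_x\|\bar\omega)+D(\bar\omega\|\bar\tau)$), we obtain
\[
I(X:B)_\Psi\le \sum_x p(x)\,D(\tau_x\|\bar\tau)+\sum_x p(x)\,D(\Psi\rho_x\|\cP\rho_x).
\]
The first term is exactly the Holevo quantity of the ensemble at the output of $\cP$, so it is at most $\chi(\cP)$. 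For the second term, write $\Psi\rho_x=\Phi^t(\Phi^t\rho_x)$ and $\cP\rho_x=\Phi^t\circ\cP(\Phi^t\rho_x)$. Applying the $\alpha$-entropy contraction $2t$ times (iterating Definition~\ref{def: alpha}, using $\Phi\circ\cP=\cP\circ\Phi$) gives
\[
D(\Psi\rho_x\|\cP\rho_x)\le e^{-2\alpha t}\,D(\rho_x\|\cP\rho_x).
\]
Finally, the Pimsner--Popa inequality $\rho\le\Lambda\,\cP(\rho)$ and operator monotonicity of the logarithm give $D(\rho\|\cP\rho)\le\log\Lambda$. This yields the first bound.

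\textbf{Quantum case.} Here the relevant map is $\idd_{A'}\otimes\Psi$, whose peripheral projection is $\idd\otimes\cP$. Let $\omega=(\idd\otimes\Psi)(\rho)$ and $\tau=(\idd\otimes\cP)(\rho)$. I would write the coherent information as
\[
I(A'\rangle B)_\omega=-D(\omega_{A'B}\,\|\,\mathbf 1_{A'}\otimes\omega_B),
\]
or, better, compare $-H(A'|B)$ directly. Since $\idd\otimes\cP$ is a conditional expectation, the chain rule above gives
\[
H(\omega_{A'B})=H(\tau_{A'B})-D(\omega_{A'B}\|\tau_{A'B}),
\]
and similarly $H(\omega_B)=H(\tau_B)-D(\omega_B\|\tau_B)$. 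The identity for the joint state is valid because $\log\tau_{A'B}$ lies in the range of $(\idd\otimes\cP)^*$, by the structure argument of Lemma~\ref{lemma:trace} applied to $\idd\otimes\cP$; for the marginal $\omega_B$ the same holds because its logarithm lies in the range of $\cP^*$. Subtracting,
\[
I(A'\rangle B)_\omega=I(A'\rangle B)_\tau+D(\omega_{A'B}\|\tau_{A'B})-D(\omega_B\|\tau_B)\le I_c(\cP)+D(\omega_{A'B}\|\tau_{A'B}).
\]
The remaining relative entropy is then bounded as in the classical case, now using the complete constants: the contraction factor $e^{-2\alpha_c t}$ and the bound $D(\rho\,\|\,(\idd\otimes\cP)\rho)\le\log\Lambda_c$.

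\textbf{Semigroup case.} The argument is identical, replacing $\Phi^{2t}$ by $\mathcal T_t$, using $\cP=\lim_{t\to\infty}\mathcal T_t$ and the continuous-time contraction $e^{-\alpha t}$. Since the paper states the continuous bound with $e^{-2\alpha_c t}$, matching its time convention will need care.

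\textbf{Main obstacle.} The main technical point is the chain-rule identity $D(\omega\|\tau')=D(\omega\|\cP\omega)+D(\cP\omega\|\tau')$ for states $\tau'$ in the peripheral subspace. This requires $\log\tau'$ and $\log\cP(\omega)$ to lie, up to cancellation of the $\log\omega_k$ parts, in the range of $\cP^*$; Lemma~\ref{lemma:trace} provides exactly this. The other point to check is that iterating the contraction inequality is legitimate, which needs $\Phi\circ\cP=\cP\circ\Phi$ and $\cP(\Phi\rho)=\Phi\cP(\rho)$. Both follow from the limit representation in Lemma~\ref{lem: gns peripheral projection}.
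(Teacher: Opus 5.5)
Your overall strategy matches the paper's. You write the gain over $\cP$ as an averaged relative entropy $D(\Phi^{2t}(\cdot)\,\|\,\cP(\cdot))$ minus a nonnegative relative entropy of the averaged or marginal states. You cancel the cross terms with Lemma~\ref{lemma:trace}, drop the negative term, and finish with iterated contraction and the Pimsner--Popa bound. The classical case, recast via the golden formula and the chain rule, is correct. The inequality $D(\omega_x\|\bar\omega)\le D(\omega_x\|\bar\tau)-D(\bar\omega\|\bar\tau)$ does not hold for each $x$ separately, but the averaged equality in your parenthetical is all you use. Iterating the contraction using only $\Phi\circ\cP=\cP\circ\Phi$ is exactly what is needed.

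The quantum case, however, rests on a false step. The identities $H(\omega_{A'B})=H(\tau_{A'B})-D(\omega_{A'B}\|\tau_{A'B})$ and $H(\omega_B)=H(\tau_B)-D(\omega_B\|\tau_B)$ do not hold in general. The exact relation is $H(\omega)=H(\tau)-D(\omega\|\tau)-\tr[(\omega-\tau)\log\tau]$, and the last term need not vanish because $\log\tau$ is \emph{not} in the range of $\cP^*$. Writing $\tau=\bigoplus_k\tau_k\otimes\omega_k$, one has $\log\tau=\bigoplus_k(\log\tau_k\otimes\mathbf 1+\mathbf 1\otimes\log\omega_k)$. The second part lies outside $\bigoplus_k\cM_{d_k}\otimes\mathbf 1$ unless every $\omega_k$ is maximally mixed. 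A concrete failure is the two-state chain with $P(0\to1)=0.1$ and $P(1\to0)=0.3$. It is reversible, has trivial peripheral space and $\pi=(3/4,1/4)$, so $\tau_B=\pi$ for every input. Your marginal identity would force $H(\omega_B)\le H(\pi)$, yet the input $\delta_1$ gives $\Phi^2(\delta_1)=(0.48,0.52)$. Lemma~\ref{lemma:trace} only controls \emph{differences} of logarithms, where the $\log\omega_k$ parts cancel. You note this yourself in your last paragraph but do not apply it here.

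The subtracted identity $I(A'\rangle B)_\omega=I(A'\rangle B)_\tau+D(\omega_{A'B}\|\tau_{A'B})-D(\omega_B\|\tau_B)$ that you arrive at is nevertheless correct, because the two correction terms coincide; proving this is precisely the paper's step. Keep the terms $\tr[(\omega_{A'B}-\tau_{A'B})\log\tau_{A'B}]$ and $\tr[(\omega_B-\tau_B)\log\tau_B]=\tr[(\omega_{A'B}-\tau_{A'B})(\mathbf 1_{A'}\otimes\log\tau_B)]$. Their difference vanishes by Lemma~\ref{lemma:trace} applied to $\idd_{A'}\otimes\cP$ with the states $\tau_{A'B}$ and $\mathbf 1_{A'}\otimes\tau_B/d_{A'}$, since $(\idd\otimes\cP)(\omega_{A'B}-\tau_{A'B})=0$. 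The normalization only adds a multiple of the identity, which is killed because $\tr(\omega-\tau)=0$.

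Alternatively, use your first formula with the correct sign, $I(A'\rangle B)_\omega=+D(\omega_{A'B}\|\mathbf 1_{A'}\otimes\omega_B)$. Split off $D(\omega_B\|\tau_B)$ by replacing $\omega_B$ with $\tau_B$, then apply your classical chain rule with reference $\mathbf 1_{A'}\otimes\tau_B$. With either repair, the rest of your argument (complete contraction and $D(\rho\|(\idd\otimes\cP)\rho)\le\log\Lambda_c$) goes through unchanged. For the semigroup, replacing $\Phi^t$ by $\cT_t$ throughout turns $\Phi^{2t}$ into $\cT_{2t}$, which gives $e^{-2\alpha_c t}$ directly.
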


\begin{proof}
We begin with a simple observation: for any quantum channel $\Phi$ and any integer $t\ge 1$, the entropy contraction constants satisfy
$
\alpha(\Phi^{t}) \ge t\, \alpha(\Phi)$,  
$\alpha_c(\Phi^{t}) \ge t\, \alpha_c(\Phi)$.
Thus, it suffices to establish the desired bounds for the case $t=1$ (that is, for $\Phi^2$); the general case then follows immediately by replacing $\alpha_c$ with $t\alpha_c$ at the end.

\noindent\textbf{Holevo information.}
Let $\rho_{XB} = \sum_x p(x) \ketbra{x}{x} \otimes \rho_x$ be a classical--quantum state. Set
\[
\sigma_B := \Phi^2(\rho_B), \; \tau_B := \mathcal P(\rho_B),\;
\sigma_x := \Phi^2(\rho_x), \; \tau_x := \mathcal P(\rho_x),
\]
where $\rho_B := \sum_x p(x)\rho_x$. 
 Therefore,
\begin{align*}
\Delta I:=& I(X:B)_{(\mathrm{id}_X \otimes \Phi^2)(\rho)} - I(X:B)_{(\mathrm{id}_X \otimes \mathcal P)(\rho)} \\
&= \big(H(\sigma_B) - H(\tau_B)\big) - \sum_x p(x)\big(H(\sigma_x) - H(\tau_x)\big).
\end{align*}
For any states $\sigma,\tau$ on the same space, we have
\[
H(\sigma) - H(\tau) = -D(\sigma\|\tau) - \operatorname{tr}\big[(\sigma - \tau)\log \tau \big].
\]
Applying this to each term, we get
\begin{align*}
H(\sigma_B) - H(\tau_B)
&= -D(\sigma_B\|\tau_B) - \operatorname{tr}\big[(\sigma_B - \tau_B)\log \tau_B \big], \\
H(\sigma_x) - H(\tau_x)
&= -D(\sigma_x\|\tau_x) - \operatorname{tr}\big[(\sigma_x - \tau_x)\log \tau_x \big].
\end{align*}
We thus find
\begin{align*}
 \Delta I=&
 -D(\sigma_B\|\tau_B) + \sum_x p(x) D(\sigma_x\|\tau_x) \\
&- \operatorname{tr}\big[(\sigma_B - \tau_B)\log \tau_B \big]
   + \sum_x p(x)\operatorname{tr}\big[(\sigma_x - \tau_x)\log \tau_x \big].
\end{align*}

By Lemma~\ref{lem: gns peripheral projection}, for a GNS-symmetric channel for each $x$, we have 
\[
\mathcal P \circ \Phi^2 = \Phi^2 \circ \mathcal P = \mathcal P.
\]
Hence, for each $x$,
$
\mathcal P(\sigma_x - \tau_x) = 0.
$
Using Lemma~\ref{lemma:trace}, since both $\tau_x = \mathcal P(\rho_x)$ and $\tau_B = \mathcal P(\rho_B)$ lie in the range of $\mathcal P$, we can replace $\log\tau_x$ by $\log\tau_B$ under the trace:
\begin{align*}
    &\sum_x p(x)\operatorname{tr}\big[(\sigma_x - \tau_x)\log\tau_x\big]
\\ =&\sum_x p(x)\operatorname{tr}\big[(\sigma_x - \tau_x)\log\tau_B\big]
= \operatorname{tr}\big[(\sigma_B - \tau_B)\log\tau_B\big],
\end{align*}
because $\sigma_B = \sum_x p(x)\sigma_x$ and $\tau_B = \sum_x p(x)\tau_x$. The trace terms cancel, and by entropy contraction, we obtain
\begin{align*}
   &I(X:B)_{(\mathrm{id}_X \otimes \Phi^2)(\rho)} - I(X:B)_{(\mathrm{id}_X \otimes \mathcal P)(\rho)}\\
\le &e^{-2\alpha} \sum_x p(x) D\big(\rho_x \,\big\|\,\mathcal P(\rho_x)\big). 
\end{align*}
Using the uniform bound
\[
\sum_x p(x) D\big(\rho_x \,\big\|\,\mathcal P(\rho_x)\big) \le \log \Lambda(\mathcal P),
\]
we arrive at
\[
I(X:B)_{(\mathrm{id}_X \otimes \Phi^2)(\rho)} - I(X:B)_{(\mathrm{id}_X \otimes \mathcal P)(\rho)}
\le e^{-2\alpha} \log \Lambda(\mathcal P).
\]
Taking the supremum over all classical--quantum states $\rho_{XB}$ gives
\begin{align*}
\chi(\Phi^2)
%&= \sup_{\rho_{XB}} I(X:B)_{(\mathrm{id}_X \otimes \Phi^2)(\rho)} \\
&\le %\sup_{\rho_{XB}} I(X:B)_{(\mathrm{id}_X \otimes \mathcal P)(\rho)}
  % + e^{-2\alpha_c} \log \Lambda(\mathcal P) \\
 \chi(\mathcal P) + e^{-2\alpha_c} \log \Lambda(\mathcal P).
\end{align*}

\medskip
\noindent\textbf{Coherent information.}
Now let $\rho_{A'A}$ be any bipartite state and define
\[
\sigma_{A'B} := (\mathrm{id}_{A'} \otimes \Phi^2)(\rho_{A'A}), \quad
\tau_{A'B} := (\mathrm{id}_{A'} \otimes \mathcal P)(\rho_{A'A}),
\]
with marginals $\sigma_B := \operatorname{tr}_{A'}\sigma_{A'B}$ and $\tau_B := \operatorname{tr}_{A'}\tau_{A'B}$.

The coherent information is
$
I(A'\rangle B)_\omega = H(B)_\omega - H(A'B)_\omega.
$
Consider the difference
\begin{align*}
\Delta I
&:= I(A'\rangle B)_{\sigma} - I(A'\rangle B)_{\tau} \\
&= \big(H(B)_\sigma - H(A'B)_\sigma\big) - \big(H(B)_\tau - H(A'B)_\tau\big) \\
&= \big(H(B)_\sigma - H(B)_\tau\big) - \big(H(A'B)_\sigma - H(A'B)_\tau\big).
\end{align*}
Using again
\(
H(\sigma) - H(\tau) = -D(\sigma\|\tau) - \operatorname{tr}\big[(\sigma - \tau)\log \tau \big],
\)
similar to Holevo information 
we obtain
\begin{align*}
H(B)_\sigma - H(B)_\tau
&= -D(\sigma_B\|\tau_B) - \operatorname{tr}\big[(\sigma_B - \tau_B)\log \tau_B\big], \\
H(A'B)_\sigma - H(A'B)_\tau
&= -D(\sigma_{A'B}\|\tau_{A'B}) - \operatorname{tr}\big[(\sigma_{A'B} - \tau_{A'B})\log \tau_{A'B}\big].
\end{align*}
Substituting into $\Delta I$ gives
\begin{align*}
\Delta I
&= \Big[-D(\sigma_B\|\tau_B) - \operatorname{tr}\big[(\sigma_B - \tau_B)\log \tau_B\big]\Big] \\
&\quad - \Big[-D(\sigma_{A'B}\|\tau_{A'B}) - \operatorname{tr}\big[(\sigma_{A'B} - \tau_{A'B})\log \tau_{A'B}\big]\Big] \\
&= -D(\sigma_B\|\tau_B) + D(\sigma_{A'B}\|\tau_{A'B}) \\
&\quad - \operatorname{tr}\big[(\sigma_B - \tau_B)\log \tau_B\big]
   + \operatorname{tr}\big[(\sigma_{A'B} - \tau_{A'B})\log \tau_{A'B}\big].
\end{align*}

As before, from $\mathcal P\Phi^2 = \mathcal P$ we get
\[
(\mathrm{id}_{A'} \otimes \mathcal P)(\sigma_{A'B} - \tau_{A'B}) = 0,
\]
so $\sigma_{A'B} - \tau_{A'B}$ lies in the kernel of $\mathrm{id}_{A'}\otimes \mathcal P$. Since both
\[
\tau_{A'B} = (\mathrm{id}_{A'} \otimes \mathcal P)(\rho_{A'A})
\quad\text{and}\quad
1_{A'} \otimes \tau_B = 1_{A'} \otimes \mathcal P(\rho_B)
\]
belong to the range of $\mathrm{id}_{A'}\otimes\mathcal P$, Lemma~\ref{lemma:trace} implies
\[
\operatorname{tr}\big[(\sigma_{A'B} - \tau_{A'B})\log\tau_{A'B}\big]
= \operatorname{tr}\big[(\sigma_{A'B} - \tau_{A'B})(1_{A'}\otimes\log\tau_B)\big]
= \operatorname{tr}\big[(\sigma_B - \tau_B)\log\tau_B\big].
\]
The trace terms cancel, and we are left with
\[
\Delta I
= -D(\sigma_B\|\tau_B) + D(\sigma_{A'B}\|\tau_{A'B})
\le D(\sigma_{A'B}\|\tau_{A'B}),
\]
since $D(\sigma_B\|\tau_B)\ge 0$. Therefore,
\[
I(A'\rangle B)_{(\mathrm{id}_{A'} \otimes \Phi^2)(\rho)}
- I(A'\rangle B)_{(\mathrm{id}_{A'} \otimes \mathcal P)(\rho)}
\le D\big((\mathrm{id}_{A'} \otimes \Phi^2)(\rho)\,\big\|\,
(\mathrm{id}_{A'} \otimes \mathcal P)(\rho)\big).
\]

By complete entropy contraction,
\[
D\big((\mathrm{id}_{A'} \otimes \Phi^2)(\rho)\,\big\|\,
(\mathrm{id}_{A'} \otimes \mathcal P)(\rho)\big)
\le e^{-2\alpha_c} D\big(\rho\,\big\|\,(\mathrm{id}_{A'} \otimes \mathcal P)(\rho)\big),
\]
and by definition of $\Lambda_c(\mathcal P)$,
\[
D\big(\rho\,\big\|\,(\mathrm{id}_{A'} \otimes \mathcal P)(\rho)\big)
\le \log \Lambda_c(\mathcal P).
\]
Thus,
\[
I(A'\rangle B)_{(\mathrm{id}_{A'} \otimes \Phi^2)(\rho)}
- I(A'\rangle B)_{(\mathrm{id}_{A'} \otimes \mathcal P)(\rho)}
\le e^{-2\alpha_c} \log \Lambda_c(\mathcal P).
\]

Taking the supremum over all $\rho_{A'A}$ yields
\begin{align*}
I_c(\Phi^2)
&= \sup_{\rho_{A'A}} I(A'\rangle B)_{(\mathrm{id}_{A'} \otimes \Phi^2)(\rho)} \\
&\le \sup_{\rho_{A'A}} I(A'\rangle B)_{(\mathrm{id}_{A'} \otimes \mathcal P)(\rho)}
  + e^{-2\alpha_c} \log \Lambda_c(\mathcal P) \\
&\le I_c(\mathcal P) + e^{-2\alpha_c} \log \Lambda_c(\mathcal P).
\end{align*}

%\medskip
\noindent\textbf{Time scaling.}
We now show how the factor $e^{-2\alpha_c t}$ arises for general $t>0$.

For integer $t \ge 1$, by iterating the one-step entropy contraction and using $\mathcal P \circ \Phi = \mathcal P$, we have
\begin{align*}
D\big(\Phi^{t}(\rho)\,\big\|\,\mathcal P(\rho)\big)
&= D\big(\Phi^{t-1}(\Phi(\rho))\,\big\|\,\Phi^{t-1}(\mathcal P(\rho))\big) \\
&\le e^{-\alpha(\Phi)} D\big(\Phi^{t-1}(\rho)\,\big\|\,\mathcal P(\rho)\big) \\
&\le e^{-\alpha(\Phi)t} D\big(\rho\,\big\|\,\mathcal P(\rho)\big),
\end{align*}
where $\alpha(\Phi)$ is the (non-complete) entropy contraction constant. Thus $\alpha(\Phi^t) \ge t\,\alpha(\Phi)$, and similarly $\alpha_c(\Phi^t) \ge t\,\alpha_c(\Phi)$ for the complete constant. Hence, for all integer $t$,
\[
D\big( (\mathrm{id}_R \otimes \Phi^{2t})(\omega) \,\big\|\,
(\mathrm{id}_R \otimes \mathcal P)(\omega) \big)
\le e^{-2\alpha_c t} D\big( \omega \,\big\|\,
(\mathrm{id}_R \otimes \mathcal P)(\omega) \big),
\]
and the above proofs of the Holevo and coherent information bounds go through with $e^{-2\alpha_c}$ replaced by $e^{-2\alpha_c t}$.
This proves the stated bounds for both $\chi(\Phi^{2t})$ and $I_c(\Phi^{2t})$.

In the continuous-time case, the exact same proof can be applied.
\end{proof}

The following lemma follows easily from the results in~\cite{Wolfe2012}.
\begin{lemma}
\label{lem: chi and Ic of peripheral}
Let $\Phi$ be a quantum channel with peripheral projection $\cP_\Phi$.  
Suppose the peripheral subspace decomposes as  
\[
\mathrm{Ran}(\cP_\Phi)=\mathcal X_\Phi
=\bigoplus_k \cM_{d_k}\otimes \omega_k,
\]
where each $\omega_k$ is a full-rank state.  
Then
\[
\chi(\cP_\Phi)= \log\!\Big(\sum_k d_k\Big),
\qquad 
I_c(\cP_\Phi)= \log\!\big(\max_k d_k\big).
\]
\end{lemma}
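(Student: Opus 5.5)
The plan is to use the structure theory of the peripheral projection from \cite{Wolfe2012} and reduce both quantities to computations on block-diagonal states. From \cite{Wolfe2012}, the peripheral projection of a channel with peripheral subspace $\bigoplus_k \cM_{d_k}\otimes\omega_k$ acts as
\[
\cP_\Phi(X)=\bigoplus_k \tr_{H_{k,2}}\!\big(P_k \tilde X P_k\big)\otimes \omega_k ,
\]
where $P_k$ is the projection onto $H_{k,1}\otimes H_{k,2}$ and $\tilde X$ is obtained from $X$ by a fixed channel that moves the transient part $H_0$ into the blocks. In the GNS-symmetric setting, $\cP^*$ is a conditional expectation onto $\bigoplus_k\cM_{d_k}\otimes 1$. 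Consequently $\cP_\Phi=\mathcal{R}\circ\mathcal{T}$, where
\begin{itemize}
\item $\mathcal{T}:B(H)\to\bigoplus_k\cM_{d_k}$ is a channel;
\item $\mathcal{R}(\bigoplus_k Y_k)=\bigoplus_k Y_k\otimes\omega_k$ is an isometric-type appending channel with left inverse $\tr_{H_{k,2}}$ on each block.
\end{itemize}
The first step is to establish this factorization and check that $\mathcal{T}\circ\mathcal{R}=\idd$ on $\bigoplus_k\cM_{d_k}$, which follows from idempotence of $\cP_\Phi$. Since $\mathcal{R}$ has a left inverse channel, data processing in both directions gives $\chi(\cP_\Phi)=\chi(\mathcal{T})$ and $I_c(\cP_\Phi)=I_c(\mathcal{T})$. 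Because $\mathcal{T}\circ\mathcal R=\idd$, $\mathcal T$ is onto, so these equal the Holevo and coherent information of the ``identity'' on the algebra $\mathcal A=\bigoplus_k\cM_{d_k}$, in the sense that any output state of $\mathcal T$ is a block-diagonal state on $\mathcal A$ and every such state is achieved.

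For the classical part, the lower bound comes from sending the $\sum_k d_k$ orthogonal pure states $\ket{i}\bra{i}\otimes\omega_k$ with uniform prior. Their images under $\mathcal T$ are the orthogonal rank-one projections of $\mathcal A$, so $I(X:B)=\log\sum_k d_k$. For the upper bound, $I(X:B)\le H(\text{output})$, and any state on $\mathcal A$ has entropy at most $\log\dim$ of a maximal abelian subalgebra, which is $\log\sum_k d_k$.

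For the coherent information, the lower bound takes a maximally entangled state between $A'$ and $H_{k^*,1}$, where $k^*$ maximizes $d_k$, tensored with $\omega_{k^*}$. After $\mathcal T$ this is maximally entangled on $\cM_{d_{k^*}}$, so $I(A'\rangle B)=\log d_{k^*}$. For the upper bound, the output $\tau_{A'B}$ is block diagonal in $B$: $\tau=\bigoplus_k p_k\tau^{(k)}_{A'B_k}$. Then
\[
H(B)-H(A'B)=\sum_k p_k\big(H(B_k)-H(A'B_k)\big)_{\tau^{(k)}}
\]
because the mixing entropy $H(p)$ cancels between the two terms. Each summand is at most $\log d_k\le\log\max_k d_k$.

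The main obstacle I expect is the first step: justifying rigorously the factorization of $\cP_\Phi$, including the handling of $H_0$ (which is absent in the GNS case since $\sigma$ is full rank) and the fact that the $\omega_k$ are exactly the reduced factors. I would lean on Lemma~\ref{lemma:trace}'s description of $\mathrm{Range}(\cP^*)=\bigoplus_k\cM_{d_k}\otimes 1$ together with \cite[Theorem 6.16]{Wolfe2012}. The remaining entropy estimates are standard.
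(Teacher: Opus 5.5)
Your proposal is correct, but it gets the upper bounds by a different mechanism than the paper.

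\emph{The paper's route.} It plugs in the explicit form $\cP(\rho)=\bigoplus_k \tr_{H_{k,2}}(\Pi_k\rho\Pi_k)\otimes\omega_k$, taking $H=\bigoplus_k H_{k,1}\otimes H_{k,2}$ with no transient part. It then expands all output entropies block by block. The Holevo quantity is bounded by $H(\bar q)+\sum_k\bar q_k\log d_k$ via concavity. For the coherent information it builds a blockwise complementary channel to get $I_c(\cP,\rho)=\sum_k q_k\big(H(\rho_{k,1}/q_k)-H(\rho_{k,2}/q_k)\big)$.

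\emph{Your route.} You factor $\cP=\mathcal{R}\circ\mathcal{T}$ through $\mathcal{A}=\bigoplus_k\cM_{d_k}$ and apply data processing in both directions. The $H(\omega_k)$ terms never appear and no complementary channel is needed. Both upper bounds reduce to trivial entropy bounds for block-diagonal states on $\mathcal{A}$ and $B(A')\otimes\mathcal{A}$.

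\emph{The factorization.} The step you flag as the main obstacle is immediate. It needs neither GNS symmetry, nor \cite[Theorem 6.16]{Wolfe2012}, nor Lemma~\ref{lemma:trace}. Let $\mathcal{L}$ be the blockwise partial-trace channel $Z\mapsto\bigoplus_k\tr_{H_{k,2}}(\Pi_k Z\Pi_k)$, completed on $H_0$ by a fixed state preparation, and set $\mathcal{T}:=\mathcal{L}\circ\cP$. Then $\mathcal{R}\circ\mathcal{T}=\cP$, because $\mathcal{R}\circ\mathcal{L}$ is the identity on $\mathcal{X}_\Phi\supseteq\mathrm{Ran}(\cP)$. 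Also $\mathcal{T}\circ\mathcal{R}=\mathcal{L}\circ\mathcal{R}=\idd$, because $\cP$ fixes $\mathcal{X}_\Phi$.

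\emph{What each buys.} Your argument proves the lemma for an arbitrary channel, transient subspace allowed, using only that $\cP$ is a channel projecting onto $\mathcal{X}_\Phi$. The paper's proof relies on the explicit formula for $\cP$, which is fine in the GNS setting where it is applied. In exchange, it gives explicit entropy expressions for every ensemble and every input.

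\emph{Two cosmetic points.}
\begin{itemize}
\item The signals $\ket{i}\bra{i}\otimes\omega_k$ are orthogonal but not pure. This is harmless, since $\chi$ allows mixed signals.
\item Your witness $\Phi^+\otimes\omega_{k^*}$ is a mixed bipartite input, while the paper defines $I_c$ as a supremum over pure inputs. This is fine: $\rho\mapsto I(A'\rangle B)_{(\idd\otimes\mathcal{T})(\rho)}$ is convex by concavity of conditional entropy, so some pure component of the witness does at least as well.
\end{itemize}
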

\begin{proof}
Let  
\[
H = \bigoplus_k \big(H_{k,1}\otimes H_{k,2}\big),
\]
with $\dim H_{k,1}=d_k$, be the block decomposition of the peripheral subspace.  
The peripheral projection acts as  
\[
\cP(\rho)
= \bigoplus_k \big(\tr_{H_{k,2}}(\Pi_k\rho\Pi_k)\otimes \omega_k\big),
\]
so its image algebra is $\bigoplus_k \cM_{d_k}\otimes \omega_k$ (see e.g.,~\cite[Eq. 2.20]{amato2025asymptotic}).

We begin with the Holevo quantity.
For an ensemble $\{p_x,\rho_x\}$, each output state satisfies
\[
\sigma_x=\cP(\rho_x)
=\bigoplus_k \big(\sigma_{x,k}\otimes \omega_k\big),
\qquad
\sigma_{x,k}=\tr_{H_{k,2}}(\Pi_k\rho_x\Pi_k).
\]
Write $q_{x,k}=\tr(\sigma_{x,k})$ and $\tilde\sigma_{x,k}=\sigma_{x,k}/q_{x,k}$, and denote
$\bar\sigma=\sum_x p_x\sigma_x$.  
By orthogonality of blocks we have
\[
H(\sigma_x)= H(q_x)+\sum_k q_{x,k}\big(H(\tilde\sigma_{x,k}) + H(\omega_k)\big),
\]
\[
H(\bar\sigma)= H(\bar q)+\sum_k \bar q_k\big(H(\tilde{\bar\sigma}_k) + H(\omega_k)\big),
\]
where $\bar q_k=\sum_x p_x q_{x,k}$ and 
$\tilde{\bar\sigma}_k = \bar\sigma_k/\bar q_k$ with 
$\bar\sigma_k=\sum_x p_x\sigma_{x,k}$.

Therefore, the Holevo quantity of the ensemble $\{p_x,\rho_x\}$ is given by
\[
\big(H(\bar q)-\sum_x p_x H(q_x)\big)
 +\sum_k\!\left[\bar q_k H(\tilde{\bar\sigma}_k)
 -\sum_x p_x q_{x,k}H(\tilde\sigma_{x,k})\right].
\]
For each $k$ with $\bar q_k>0$,
\[
\tilde{\bar\sigma}_k
=\sum_x \frac{p_x q_{x,k}}{\bar q_k}\,\tilde\sigma_{x,k},
\]
so by concavity of entropy the term in brackets is nonnegative, and likewise
$H(\bar q)-\sum_x p_x H(q_x)\ge 0$.  Since $H(\tilde{\bar\sigma}_k)\le \log d_k$,
the Holevo quantity of the output ensemble $\{p_x,\sigma_x\}$ is bounded by
\[
H(\bar q)+\sum_k \bar q_k\log d_k.
\]

For any probability vector $(\bar q_k)$ and positive integers $(d_k)$,
\[
H(\bar q)+\sum_k \bar q_k\log d_k
= H(r) \le \log\!\Big(\sum_k d_k\Big),
\]
where $r_{k,i}=\bar q_k/d_k$ is a distribution on $\sum_k d_k$ outcomes.
Since this bound holds for every input ensemble $\{p_x,\rho_x\}$,
taking the supremum shows that the Holevo capacity satisfies
$\chi(\cP)\le \log\!\big(\sum_k d_k\big)$.
To see that equality is achieved, take an orthonormal basis 
$\{|k,i\rangle\}_{i=1}^{d_k}\subset H_{k,1}$ and any vectors $|\beta_k\rangle\in H_{k,2}$, and define an ensemble of size $N=\sum_k d_k$ by
\[
p_{k,i}=1/N,\qquad 
\rho_{k,i}=|k,i\rangle\!\langle k,i|\otimes|\beta_k\rangle\!\langle\beta_k|.
\]
These produce mutually orthogonal outputs  
$\sigma_{k,i}=|k,i\rangle\!\langle k,i|\otimes \omega_k$, with this choice, a direct calculation shows  
$\chi(\cP) \geq \log N$.  
Hence  
\[
\chi(\cP)=\log\!\Big(\sum_k d_k\Big).
\]

We now turn to the coherent information.  
Let $\cP^c$ be a complementary channel of $\cP$.  
Since $\cP$ is a direct sum of channels
\[
\cP_k:B(H_{k,1}\otimes H_{k,2})\to B(H_{k,1}\otimes H_{k,2}),
\qquad 
\cP_k(\rho)=\tr_{H_{k,2}}(\rho)\otimes\omega_k,
\]
we can choose the Stinespring dilations blockwise so that $\cP^c$ also decomposes as a direct sum,
\[
\cP^c(\rho)=\bigoplus_k \big(\rho_{k,2}\otimes \omega_k'\big),
\]
where $\rho_{k,2}=\tr_{H_{k,1}}(\Pi_k\rho\Pi_k)$ and $\omega_k'$ are fixed states (depending only on $k$) with
$H(\omega_k')=H(\omega_k)$ (they arise as the complementary marginals of a purification of~$\omega_k$).

Let $q_k=\tr(\Pi_k\rho\Pi_k)$, so $\sum_k q_k=1$.  Using the block-entropy formula, we obtain
\[
H(\cP(\rho))
=H(q)+\sum_k q_k\big(H(\rho_{k,1}/q_k)+H(\omega_k)\big),
\]
\[
H(\cP^c(\rho))
=H(q)+\sum_k q_k\big(H(\rho_{k,2}/q_k)+H(\omega_k)\big),
\]
where $\rho_{k,1}=\tr_{H_{k,2}}(\Pi_k\rho\Pi_k)$.
Subtracting gives an explicit expression for the coherent information:
\[
I_c(\cP,\rho)
= H(\cP(\rho)) - H(\cP^c(\rho))
=\sum_k q_k\big(H(\rho_{k,1}/q_k)-H(\rho_{k,2}/q_k)\big).
\]

Now $H(\rho_{k,1}/q_k)\le \log d_k$ because $\rho_{k,1}/q_k$ acts on $H_{k,1}$ of dimension $d_k$, and $H(\rho_{k,2}/q_k)\ge 0$, so  
\[
I_c(\cP,\rho)\le \sum_k q_k\log d_k\le \log(\max_k d_k).
\]
Thus $I_c(\cP)\le \log(\max_k d_k)$.

Finally, this upper bound is achievable.  
Pick an index $k^\star$ attaining $d_{k^\star}=\max_k d_k$, and take the state
\[
\rho=\Pi_{k^\star}\Big(\frac{I_{H_{k^\star,1}}}{d_{k^\star}}
 \otimes |\beta\rangle\!\langle\beta|\Big)\Pi_{k^\star},
\]
for any $|\beta\rangle\in H_{k^\star,2}$.  
Then $q_{k^\star}=1$ and all other $q_k$ vanish.  
In the expression above only the $k^\star$-block contributes, with
$H(\rho_{k^\star,1}/q_{k^\star})=\log d_{k^\star}$ and
$H(\rho_{k^\star,2}/q_{k^\star})=0$, so
\[
I_c(\cP,\rho)=\log d_{k^\star}.
\]
Therefore  
\[
I_c(\cP)=\log(\max_k d_k).
\]

This completes the proof.
\end{proof}

With these lemmas established, we can now turn to the proof of Theorem \ref{thm: single shot capacity}.

\begin{proof}[Proof of Theorem \ref{thm: single shot capacity}]
We first treat the classical capacity: using Fano's inequality and the Holevo bound for $\Phi^{2t}$, together with the mixing estimate for $\chi(\Phi^{2t})$ and the structure of the peripheral projection $\cP$, we obtain the bound on $C_\delta(\Phi^{2t})$. 

\textbf{Classical capacity.}
A $(d,\delta)$ classical code for $\Phi^{2t}$ is equivalent to the following: there is a uniform message $X\in[d]$ and a decoder’s guess $Y$, such that the average error
\[
P_e := \Pr\{Y\neq X\} \le \delta.
\]
 In fact, let the encoder images be
 \[
 \rho_i:=\cE(\ketbra{i}{i})\in B(H_A),\qquad
 \sigma_i:=\Phi^{2t}(\rho_i)\in B(H_B).
 \]
 Realize the decoder $\cD$ in the Heisenberg picture: the adjoint $\cD^\ast$ maps classical projectors to POVM elements
 \[
 M_i:=\cD^\ast(\ketbra{i}{i})\ge 0,\qquad
 \sum_{i=1}^d M_i=\mathbf 1_{B}.
 \]
 Measuring $\{M_i\}$ on $B$ produces the classical output $Y\in[d]$. The average success probability equals
 \[
 P_{\mathrm{succ}}
 = \frac{1}{d}\sum_{i=1}^{d}\tr(M_i\sigma_i)
 =F_c(\cD\circ \Phi^{2t}\circ \cE)\ \ge\ 1-\delta.
 \]

\emph{Fano's inequality.}
Let $E:=\mathbf 1_{\{X\neq Y\}}$ be the error indicator. Using the chain rule and $H(E|X,Y)=0$,
\[
H(X|Y)=H(E|Y)+H(X|E,Y).
\]
We bound each term:
\begin{align*}
H(E|Y)&\le H(E)=h(P_e)\ \le\ h(\delta),\\
H(X|E,Y)
&=\Pr\{E=0\} H(X|E=0,Y)\\&+\Pr\{E=1\} H(X|E=1,Y)\\
&\le (1-P_e)\cdot 0 + P_e \cdot \log(d-1)\ \le\ \delta \log(d-1).
\end{align*}
Therefore
\begin{equation}
H(X|Y)\ \le\ h(\delta)+\delta\log(d-1).
\label{eq:Fano}
\end{equation}
Since $X$ is uniform on $[d]$, $H(X)=\log d$, hence
\begin{equation}
I(X\!:\!Y)=H(X)-H(X|Y)\ \ge\ \log d - h(\delta)-\delta\log(d-1).
\label{eq:FanoMutual}
\end{equation}
Using $\log(d-1)\le \log d$ gives
\begin{equation}
I(X\!:\!Y)\ \ge\ (1-\delta)\log d - h(\delta).
\label{eq:FanoRelaxed}
\end{equation}

\textbf{Holevo bound via data processing.}
By data processing along $X\to B\to Y$,
\begin{equation}
I(X\!:\!Y)\ \le\ I(X\!:\!B)_{\rho_{XB}},
\label{eq:DP}
\end{equation}
where $\rho_{XB}=\frac{1}{d}\sum_i \ketbra{i}{i}\otimes \sigma_i$.
Since this is a cq-state obtained from a cq-input via $\Phi^{2t}$,
\begin{equation}
I(X\!:\!B)_{\rho_{XB}}\ \le\ \chi(\Phi^{2t}).
\label{eq:HolevoSup}
\end{equation}
Combining \eqref{eq:DP} and \eqref{eq:HolevoSup},
\begin{equation}
I(X\!:\!Y)\ \le\ \chi(\Phi^{2t}).
\label{eq:HolevoStep}
\end{equation}

From \eqref{eq:FanoRelaxed} and \eqref{eq:HolevoStep},
\[
(1-\delta)\log d - h(\delta)\ \le\ \chi(\Phi^{2t}),
\]
so
\[
\log d\ \le\ \frac{\chi(\Phi^{2t})+h(\delta)}{1-\delta}.
\]

By Lemma~\ref{lem: holevo coherent over time}, we know
\[
\chi(\Phi^{2t}) \le \chi(\cP) + e^{-2\alpha_c t}\log\Lambda(\cP).
\]
Using $\chi(\cP)=\log\big(\sum_k d_k\big)$ from Lemma~\ref{lem: chi and Ic of peripheral}, we obtain
\[
\chi(\Phi^{2t})
\le \log\Big(\sum_{k=1}^K d_k\Big) + e^{-2\alpha_c t}\log\Lambda(\cP).
\]
Since the above holds for every $(d,\delta)$ code for $\Phi^{2t}$, taking the supremum over $d$ yields
\[
C_\delta(\Phi^{2t})\ \le\ \frac{\log\big(\sum_{k=1}^K d_k\big) + h(\delta)
                              + e^{-2\alpha_c t}\log\Lambda(\cP)}{1-\delta},
\]
which is \eqref{eq:class-bound}.

\textbf{Quantum capacity.} The proof is similar and we use the quantum Fano inequality and the Araki--Lieb inequality to lower bound the coherent information of the overall code map, relate it to $I_c(\Phi^t)$ via data processing, upper bound $I_c(\Phi^t)$ by the mixing lemma and $I_c(\cP)$, and finally rearrange to obtain the  capacity bound \eqref{eq:alt}.

 \medskip
 Now suppose there exists a $(d,\delta)$ quantum code for $\Phi^{2t}$, with encoder $\cE$ and decoder $\cD$. Define the overall map
 \[
 N := \cD\circ \Phi^{2t}\circ \cE.
 \]
 Let $\tau_S$ be the maximally mixed state on a $d$-dimensional system $S$, and let $\phi_{RS}$ be a maximally entangled state on $R\otimes S$. Define
 \[
 \omega_{R\hat B} := (\idd_R\otimes N)(\phi_{RS}).
 \]
 By assumption on the code,
 \[
 F_e(\tau_S,N)\;=\;\bra{\phi}\omega_{R\hat B}\ket{\phi}\ \ge\ 1-\delta.
 \]
 By the quantum Fano inequality~\cite[Theorem 12.9.]{Nielsen_Chuang_2010},
 \begin{equation}
 \label{eq:quantum-fano}
 H(\omega_{R\hat B})
 %= H_e(\tau_S,N)
 \ \le\ h(\delta) + \delta\,\log(d^2-1).
 \end{equation}
 On the other hand, the Araki--Lieb inequality gives
 \(
 |H(R)-H(\hat B)| \leq H(R\hat B).
 \)
 Since $\omega_R$ is maximally mixed, $H(\omega_R)=\log d$, so
 \[
 H(\omega_{R\hat B}) \geq \log d - H(\omega_{\hat B}).
 \]
 Combining this with \eqref{eq:quantum-fano} we obtain
 \begin{equation}
 \label{eq:AL}
 H(\omega_{\hat B})\ \ge \log d
  -\big[h(\delta)+\delta\,\log(d^2-1)\big].
 \end{equation}

 Next, consider the coherent information of $N$ with maximally mixed input:
 \begin{equation}\label{eq:ic-identity}
 I_c(\tau_S,N)
 := H\big(N(\tau_S)\big) - H\big((\idd\otimes N)(\phi_{RS})\big)
  = H(\omega_{\hat B}) - H(\omega_{R\hat B}).
 \end{equation}
 Using \eqref{eq:quantum-fano} and \eqref{eq:AL},
 \[
 I_c(\tau_S,N)
 \ge \big[\log d - h(\delta)-\delta\log(d^2-1)\big]
     - \big[h(\delta)+\delta\log(d^2-1)\big],
 \]
 hence
 \begin{equation}
 \label{eq:IC-lower}
 I_c(\tau_S,N)
 \ \ge\ \log d - 2\big[h(\delta)+\delta\,\log(d^2-1)\big].
 \end{equation}

 We now relate this to the coherent information of $\Phi^{2t}$.  
 Let $\rho_A := \cE(\tau_S)$ be the average input to the channel.  
 By definition,
 \[
 I_c(\Phi^{2t}) := \sup_{\rho} I_c(\rho,\Phi^{2t}) \ \ge\ I_c(\rho_A,\Phi^{2t}).
 \]
 Moreover, $N=\cD\circ \Phi^{2t}\circ \cE$ differs from $\Phi^{2t}$ only by pre- and post-processing. Using the data-processing inequality for coherent information (equivalently, monotonicity of conditional entropy under local channels on $B$ and the encoding extension on the input), we have
 \[
 I_c(\rho_A,\Phi^{2t}) \ \ge\ I_c(\tau_S,N).
 \]
 Thus
 \begin{equation}
 \label{eq:IC-chain}
 I_c(\Phi^{2t})\ \ge\ I_c(\tau_S,N).
 \end{equation}

 Combining \eqref{eq:IC-lower} and \eqref{eq:IC-chain}, we obtain
 \[
 \log d \;\le\; I_c(\Phi^{2t}) + 2h(\delta) + 2\delta\,\log(d^2-1).
 \]

 Finally, we upper bound $I_c(\Phi^{2t})$ using the mixing lemma.  
 From Lemma~\ref{lem: holevo coherent over time}, we have
 \[
 I_c(\Phi^{2t}) - I_c(\cP)
 \;\le\;
 e^{-2\alpha_c t}\log \Lambda_c(\cP),
 \qquad \forall t>0.
 \]

 Using $I_c(\cP)=\log(\max_k d_k)$ from Lemma~\ref{lem: chi and Ic of peripheral}, we obtain
 \[
 I_c(\Phi^{2t}) \;\le\;
 \log\big(\max_k d_k\big)
 + e^{-2\alpha_c t}\log\Lambda_c(\cP).
 \]
 Substituting into the bound on $\log d$ gives :
 \[
 \log d
 \;\le\;
 \log \big(\max_{k} d_k\big)
 + e^{-2\alpha_c t}\log \Lambda_c(\cP)
 + 2h(\delta)
 + 2 \delta \log(d^2-1).
 \]

 If  $\delta<\tfrac{1}{4}$ and $Q_\delta(\Phi^{2t})= 0$ then \eqref{eq:alt} is trivial.
 If  $Q_\delta(\Phi^{2t})> 0$ (equivalently $d>1$), then
 \[
 \log(d^2-1) \le 2\log d,
 \]
 so 

 \[
 \log d \;\le\;
 \frac{\log(\max_k d_k)
       + e^{-2\alpha_c t}\log\Lambda_c(\cP)
       + 2h(\delta)}{1-4\delta}.
 \]
 Taking the supremum over such $d$ yields
 \[
 Q_\delta(\Phi^{2t})
 \;\le\;
 \frac{\log(\max_k d_k)
       + e^{-2\alpha_c t}\log\Lambda_c(\cP)
       + 2h(\delta)}{1-4\delta},
 \]
 which is \eqref{eq:alt}.

The semigroup case follows by replacing $\Phi^t$ with $\mathcal T_t$ and using the same entropy-contraction assumptions for the generator $\mathcal L$.
\end{proof}

We now present the proof of Theorem \ref{thm: Capacity upper bound}.
\begin{proof}[Proof of Theorem \ref{thm: Capacity upper bound}]
The asymptotic vanishing-error capacities of a channel $\Phi$ are given by
\begin{align*}
C(\Phi) &= \lim_{n\to\infty} \frac{1}{n}\,\chi(\Phi^{\otimes n}),\\
Q(\Phi) &= \lim_{n\to\infty} \frac{1}{n}\,I_c(\Phi^{\otimes n}).
\end{align*}

For a GNS-symmetric channel $\Phi$, it follows from~\cite{GR} that 
$\alpha_c(\Phi^{\otimes n}) = \alpha_c(\Phi)$.
Thus it remains to analyze how $\Lambda_c$ behaves under tensorization.

Let $\mathcal{P}: B(H_A)\to B(H_A)$ denote the peripheral projection of the channel $\Phi$.  
For any operator $Z\in B(H_A)\otimes B(H_A)$, we have
\begin{align*}
    Z 
    &\le \Lambda_c(\mathcal{P})\,(\mathcal{P}\otimes \mathrm{id}_A)(Z) \\
    &\le \Lambda_c(\mathcal{P})^2\,(\mathrm{id}_A\otimes \mathcal{P})\!\left((\mathcal{P}\otimes \mathrm{id}_A)(Z)\right) \\
    &= \Lambda_c(\mathcal{P})^2\,(\mathcal{P}\otimes \mathcal{P})(Z).
\end{align*}
Therefore,
\(
\Lambda_c(\mathcal{P}^{\otimes 2}) \le \Lambda_c(\mathcal{P})^2,
\)
and this inequality extends inductively to all $n>0$:
\(
\Lambda_c(\mathcal{P}^{\otimes n}) \le \Lambda_c(\mathcal{P})^n.
\)

Combining these facts with Lemma~\ref{lem: holevo coherent over time} completes the proof.
\end{proof}

\subsection{Zero-error  capacity}

The one-shot zero-error capacities $\mathcal{C}_0 \in \{C_0, Q_0\}$ quantify the amount of information a channel can transmit perfectly, without any chance of confusion between inputs. Their behavior is highly sensitive to the channel’s structure, making them fundamentally different from capacities allowing for error. In this subsection, we show how zero-error capacities behave under convex combinations and how, for GNS-symmetric channels, they stabilize to those of the limiting projection $\mathcal{P}$ once the channel has mixed sufficiently. The first step is the following monotonicity lemma, the proof of which is omitted.
\begin{lemma}
\label{lem: zero error convex}
If a channel $\Phi=\epsilon \cP+(1-\epsilon)\Psi$ is a convex combination of two channels $\cP$ and $\Psi$ with $\epsilon\in (0,1)$, then
\[\mathcal{C}_0(\Phi)\le \mathcal{C}_0(\cP).\]
Here $\mathcal{C}$ denotes either the classical capacity $C$ or the quantum capacity $Q$.
\end{lemma}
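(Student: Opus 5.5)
The plan is to prove the stronger structural fact that every zero-error code for $\Phi$ is also a zero-error code for $\cP$, with the \emph{same} encoder and a suitable decoder. Taking the supremum over codes then gives $\mathcal{C}_0(\Phi)\le\mathcal{C}_0(\cP)$. The mechanism is that perfect fidelity is an extreme value: a convex combination with strictly positive weights can reach it only if every component reaches it.

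\textbf{Classical case.} Let $\cE,\cD$ be a $(d,0)$ classical code for $\Phi$, so $F_c(\cD\circ\Phi\circ\cE)=1$. The fidelity $F_c$ is affine in the middle channel, hence
\[
1=F_c(\cD\circ\Phi\circ\cE)=\epsilon\,F_c(\cD\circ\cP\circ\cE)+(1-\epsilon)\,F_c(\cD\circ\Psi\circ\cE).
\]
Both fidelities on the right are at most $1$, and $\epsilon\in(0,1)$ means both weights are strictly positive. Therefore $F_c(\cD\circ\cP\circ\cE)=1$, and $(\cE,\cD)$ is a $(d,0)$ classical code for $\cP$. So every $d$ achievable for $\Phi$ is achievable for $\cP$, which gives $C_0(\Phi)\le C_0(\cP)$.

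\textbf{Quantum case.} The argument is identical with $F_e$ in place of $F_c$. The map $N\mapsto\langle\phi|(\idd\otimes N)(|\phi\rangle\langle\phi|)|\phi\rangle$ is affine, and it is at most $1$ for every channel. From $F_e(\cD\circ\Phi\circ\cE)=1$ we therefore get $F_e(\cD\circ\cP\circ\cE)=1$, hence $Q_0(\Phi)\le Q_0(\cP)$.

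\textbf{Zero-error definitions.} If $\mathcal{C}_0$ is instead defined through orthogonality of output supports, I will run the same argument at the level of supports. The support of $\Phi(\rho)$ contains the support of $\cP(\rho)$, because $\Phi(\rho)\ge\epsilon\,\cP(\rho)$. Consequently, orthogonality of the outputs $\Phi(\rho_i)$ forces orthogonality of the outputs $\cP(\rho_i)$. The same domination handles the Knill--Laflamme type condition in the quantum case: the Kraus operators of $\cP$, scaled by $\sqrt{\epsilon}$, form part of a Kraus family of $\Phi$, and the error-correction conditions for a family imply them for any subfamily.

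\textbf{Asymptotic versions.} If the asymptotic zero-error capacities are meant, I apply the above to $\Phi^{\otimes n}$. Expanding the tensor power gives
\[
\Phi^{\otimes n}=\epsilon^n\cP^{\otimes n}+(1-\epsilon^n)\Psi',
\]
for some channel $\Psi'$, which is again a convex combination with weights in $(0,1)$. The one-shot bound then holds for every $n$, and dividing by $n$ passes to the limit.

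The only real care needed is that the weights be strictly positive, which is exactly the hypothesis $\epsilon\in(0,1)$. Beyond that, the step I consider the main obstacle is making sure the argument matches whichever zero-error definition the paper uses.
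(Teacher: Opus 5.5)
Your proposal is correct and is essentially the paper's argument: the paper also keeps the same encoder and decoder and uses that the target output ($\ketbra{\phi}{\phi}$, resp.\ $\ketbra{i}{i}$) is pure, hence extreme, so the $\cP$-branch of the convex combination must reproduce it exactly. Your version (an affine fidelity attaining its maximum value $1$ under strictly positive weights) is the same extremality argument, and your explicit treatment of the regularized case via $\Phi^{\otimes n}=\epsilon^n\cP^{\otimes n}+(1-\epsilon^n)\Psi'$ usefully fills in a step the paper only gestures at.
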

 \begin{proof}
 We argue for the quantum capacity $Q_0$ in the one-shot setting. By definition, $Q_0(\Phi)\ge \log d$ if there exists an encoder $\cE:\mathcal{M}_d\to B(H)$ and decoder $\cD:B(H)\to \cM_d$  
 such that
 \[ \idd\otimes \cD\Phi\cE(\ketbra{\phi}{\phi})=\ketbra{\phi}{\phi}\]
 where $\ketbra{\phi}{\phi}$ is the maximally entangled state on $\mathbb{C}^d\otimes \mathbb{C}^d$. Since
 $\Phi=\epsilon \cP+(1-\epsilon)\Psi$,
 we have
 \[ \epsilon\idd\otimes \cD\cP\cE(\ketbra{\phi}{\phi})+(1-\epsilon)\idd\otimes \cD\Psi\cE(\ketbra{\phi}{\phi})=\ketbra{\phi}{\phi}\]
 Since $\ketbra{\phi}{\phi}$ is a pure state, we know
 \[\idd\otimes \cD\cP\cE(\ketbra{\phi}{\phi})=\ketbra{\phi}{\phi}\]
 which implies
 \[ Q_0(\cP)\ge \log d\]
 Then $Q_0(\Phi)\le  Q_0(\cP)$, and the same inequality for the regularization implies
 \( Q_0(\Psi)\ge Q_0(\Phi). \)

 The same argument holds for $C_0$, as $C_0(\Phi)\ge \log d$ means that there exists an encoder $\cE:\cM_d\to B(H)$ and decoder $\cD:B(H)\to \cM_d$ such that
 \[  \cD\Phi\cE(\ketbra{i}{i})=\ketbra{i}{i}\]
 for an O.N.B $\{\ket{i}\}$ of $\mathbb{C}^d$.
 \end{proof}

Our main result in this section is the following.
\begin{theorem}
Let $\Phi$ be a GNS-symmetric channel with $\mathcal{P} = \lim_{t \to \infty} \Phi^{2t}$, and $\lambda$ defined as \eqref{eq: lambda for channel def}.  Then
\[
\mathcal{C}_0(\Phi^t) = \mathcal{C}_0(\mathcal{P}), \qquad 
\text{for all } t > \frac{\ln\bigl(10\,\Lambda_{c}(\mathcal{P})\bigr)}{\lambda},
\]
where the capacity $\mathcal{C}$ may be either $C$ or $Q$.
\end{theorem}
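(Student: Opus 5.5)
Two inequalities are needed: $\mathcal{C}_0(\Phi^t)\le \mathcal{C}_0(\mathcal{P})$, and $\mathcal{C}_0(\mathcal{P})\le \mathcal{C}_0(\Phi^t)$ for the stated range of $t$.

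\emph{Lower bound.} This direction holds for every $t$. Since $\mathcal{P}^*$ is a conditional expectation, $\mathcal{P}$ is a projection onto $\bigoplus_k \mathcal{M}_{d_k}\otimes\omega_k$. The proof of Lemma~\ref{lem: chi and Ic of peripheral} exhibits inputs supported in this block structure, namely $|k,i\rangle\langle k,i|\otimes|\beta_k\rangle\langle\beta_k|$ for classical codes, and a maximally entangled state on $H_{k^\star,1}$ tensored with $|\beta\rangle$ for quantum codes. We take the encoder to prepare $\mathcal{P}$ applied to these states, i.e.\ $X\otimes\omega_k$ with $X\in\mathcal{M}_{d_k}$. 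Such states are invariant under $\Phi^{2}$. Because $\Phi$ acts as a unitary-type automorphism on the peripheral algebra (for GNS-symmetric channels the peripheral eigenvalues are $\pm1$), the images under $\Phi^t$ remain exactly decodable: the decoder first undoes the peripheral automorphism and then traces out $H_{k,2}$. This shows $\mathcal{C}_0(\Phi^t)\ge \mathcal{C}_0(\mathcal{P})$.

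\emph{Upper bound.} The aim is to write $\Phi^t=\epsilon\,\mathcal{P}'+(1-\epsilon)\Psi$ for some channel $\Psi$ and some $\epsilon\in(0,1)$, where $\mathcal{P}'=\Phi^t\mathcal{P}$ is $\mathcal{P}$ composed with a reversible peripheral map, so that $\mathcal{C}_0(\mathcal{P}')=\mathcal{C}_0(\mathcal{P})$. Lemma~\ref{lem: zero error convex} then gives $\mathcal{C}_0(\Phi^t)\le\mathcal{C}_0(\mathcal{P})$. It therefore suffices to show that $\Phi^t-\epsilon\mathcal{P}'$ is completely positive, which is equivalent in the Heisenberg picture to
\[
\Phi^{*t}(Y)\ \ge\ \epsilon\,\mathcal{P}'^*(Y)\quad \text{for all } Y\ge0 \text{ on }\mathcal{M}_n\otimes B(H).
\]
Write $\Phi^{*t}=\mathcal{P}'^*+\Phi^{*t}(\mathrm{id}-\mathcal{P}^*)$. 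The $L_2$ contraction in \eqref{eq: lambda for channel def} controls the second term through $\|\Phi^{*t}(\mathrm{id}-\mathcal{P}^*)\|\le e^{-\lambda t}$. Combined with the Pimsner--Popa bound $Y\le \Lambda_c\,\mathcal{P}^*(Y)$, suitably translated into an operator-norm estimate of $\Phi^{*t}(\mathrm{id}-\mathcal{P}^*)(Y)$ relative to $\mathcal{P}^*(Y)$ by passing through the weighted $L_2(\sigma)$ and $L_\infty$ norms as in \cite[Section 5]{GJLL}, this yields
\[
-\,c\,\Lambda_c\,e^{-\lambda t}\,\mathcal{P}^*(Y)\ \le\ \Phi^{*t}(\mathrm{id}-\mathcal{P}^*)(Y)
\]
for a numerical constant $c$. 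Once $c\Lambda_c e^{-\lambda t}<1$, we may take $\epsilon=1-c\Lambda_c e^{-\lambda t}>0$. This is precisely the type of estimate behind \eqref{eq: alpha lower bound}, and we expect the threshold $t>\ln(10\Lambda_c)/\lambda$ to come from the constant $10$ there.

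\emph{Main obstacle.} The hardest step is the last one: obtaining the completely positive order comparison $\Phi^t\ge_{cp}\epsilon\,\mathcal{P}'$ from a spectral-type (norm) contraction. Our first attempt will be to import the corresponding lemma of \cite{GJLL}, used in the proof of their Corollary 5.2, which asserts $(1-\epsilon)\mathcal{P}^*\le_{cp}\Phi^{*t}\le_{cp}(1+\epsilon)\mathcal{P}^*$ once $t\ge \ln(10\Lambda_c)/\lambda$ or a comparable threshold. A secondary point is the handling of odd $t$ and of the eigenvalue $-1$: where the theorem is read with $\mathcal{P}$ in place of $\mathcal{P}'$, we will check that composing with a peripheral automorphism preserves zero-error codes.
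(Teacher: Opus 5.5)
Your argument has the same core as the paper's. The paper also takes a cp-order comparison from \cite[Proposition 4.12]{GJLL}, writes $\Phi^t$ as a convex combination involving $\cP$, and applies Lemma~\ref{lem: zero error convex}. Your two additions are substantive rather than cosmetic.

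\textbf{The lower bound.} The paper never proves $\mathcal{C}_0(\Phi^t)\ge\mathcal{C}_0(\cP)$, since Lemma~\ref{lem: zero error convex} only gives $\le$; your lower-bound paragraph supplies it. Its cleanest form is this: any zero-error code $(\cE,\cD)$ for $\cP$ gives the zero-error code $(\cP\circ\cE,\ \cD\circ\Phi^{t\bmod 2})$ for $\Phi^t$, because $\Phi^2\cP=\cP$. This also spares you from checking that the particular codes of Lemma~\ref{lem: chi and Ic of peripheral} are optimal for $\cP$.

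\textbf{Odd $t$.} The paper's claim $0.9\,\cP\le_{cp}\Phi^t$ for every $t$ above the threshold is false for odd $t$ whenever $-1$ is a peripheral eigenvalue. Indeed, if $\Phi(X)=-X$ for a Hermitian $X\in\mathcal X_\Phi$, then $\Phi^t=\epsilon\cP+(1-\epsilon)\Psi$ with $\epsilon\in(0,1)$ forces $\Psi(X)=-\frac{1+\epsilon}{1-\epsilon}X$. This contradicts trace-norm contractivity of the channel $\Psi$. Comparing with $\cP'=\Phi^t\cP$ instead, and using $\mathcal{C}_0(\Phi\cP)=\mathcal{C}_0(\cP)$ (post-processing by $\Phi$ in each direction), is exactly the right repair.

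\textbf{One slip in your key estimate.} The same issue appears in your sketch. For odd $t$ the bound you need is $\Phi^{*t}(\idd-\cP^*)(Y)\ge -c\Lambda_c e^{-\lambda t}\,\Phi^{*t}\cP^*(Y)$, i.e.\ domination by $\cP'^*(Y)$. With $\cP^*(Y)$ on the right, as you wrote it, the inequality fails for every constant. To see this, take $\Phi(\rho)=(\sigma_x\otimes{\bf 1})\,(\idd\otimes\mathcal{D})(\rho)\,(\sigma_x\otimes{\bf 1})$, with $\mathcal{D}(\rho)=(1-p)\rho+p\,\tr(\rho){\bf 1}/2$ and $0<p<1$. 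Set $Y=\ketbra{0}{0}\otimes\ketbra{0}{0}$ and test on $\ket{1}\otimes\ket{1}$: the left side gives $-\tfrac12(1-p)^t<0$ while the right side gives $0$. Likewise, the comparison with $\cP^*$ that you intend to import from \cite{GJLL} fails for odd $t$, by the argument above.

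There are two ways to fix this:
\begin{enumerate}
\item For odd $t$ with $t-1$ above the threshold, you can sidestep the issue via data processing, $\mathcal{C}_0(\Phi^t)\le\mathcal{C}_0(\Phi^{t-1})$.
\item To keep the exact threshold for every odd $t$, rerun the GJLL estimate with the twisted module property $\Phi^{*t}(axb)=\theta^t(a)\,\Phi^{*t}(x)\,\theta^t(b)$ for $a,b\in\mathrm{Ran}(\cP^*)$. Here $\theta=\Phi^*|_{\mathrm{Ran}(\cP^*)}$ is a $*$-automorphism with $\theta^2=\idd$; GNS-symmetry and faithfulness of the invariant state put $\mathrm{Ran}(\cP^*)$ in the multiplicative domain of $\Phi^*$. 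Writing $x=e^{1/2}ye^{1/2}$ with $e=\cP^*(x)$ then gives $\Phi^{*t}(x)\ge(1-\eta)\,\theta^t(e)=(1-\eta)\,\cP'^*(x)$, as soon as $\|\Phi^{*t}(\idd-\cP^*)(y)\|_\infty\le\eta$ whenever $\cP^*(y)={\bf 1}$. That last bound does not depend on the parity of $t$.
\end{enumerate}
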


 \begin{proof}
 Proposition~4.12 of~\cite{GJLL} states that for any 
 \[
 t > \frac{\ln(10\Lambda_{c}(\mathcal{P}))}{\lambda},
 \]
 we have the completely positive order bounds
 \[
 0.9\,\mathcal{P} \;\le_{cp}\; \Phi^t \;\le_{cp}\; 1.1\,\mathcal{P}.
 \]
 Thus, for every such $t$,
 \[
 \Phi^t = (1 - \varepsilon)\mathcal{P} + \varepsilon\,\Psi,
 \]
 where $\Psi$ is a quantum channel and $\varepsilon \le 0.1$.

 By Lemma~\ref{lem: zero error convex}, this implies that for all  
 \[
 t > \frac{\ln(10\Lambda_{c}(\mathcal{P}))}{\lambda},
 \]
 the zero-error capacities satisfy
 \[
 C_0(\Phi^t) = C_0(\mathcal{P}), \qquad 
 Q_0(\Phi^t) = Q_0(\mathcal{P}).
 \]
 \end{proof}

\begin{remark}
To understand how the above bound behaves under tensorization, consider the channel $\Phi^{\otimes n}$. The spectral gap remains unchanged under tensor powers, i.e.\ $\lambda(\Phi^{\otimes n})=\lambda(\Phi)$. Moreover, as shown in the proof of Theorem~\ref{thm: Capacity upper bound}, the quantity $\Lambda_{c}$ appearing in Proposition~4.12 satisfies the tensorization property
\[
\Lambda_{c}\bigl(\mathcal{P}^{\otimes n}\bigr)
    = \Lambda_{c}(\mathcal{P})^{\,n}.
\]
Therefore, when applying the theorem to $\Phi^{\otimes n}$, the threshold
scales as
\[
\frac{\ln \bigl(10\,\Lambda_{c}(\mathcal{P})^{\,n}\bigr)}{\lambda}
= \frac{n\,\ln \bigl(\Lambda_{c}(\mathcal{P})\bigr)+\ln(10)}{\lambda}.
\]
Hence, the mixing-time bound grows linearly in the number of copies $n$, or equivalently, in the number of qubits. This improves the bound of \cite{singh2024zero}, where the zero-error capacity was shown to reach its limiting value only for $t \ge d^2$, with $d$ the Hilbert space dimension (which grows exponentially with the number of copies or qubits).

\end{remark}
\section{Example: active versus passive error correction}
\label{sec: examples}

In this section we illustrate how the passive upper bounds of 
Theorem~\ref{thm: Capacity upper bound} can be combined with lower bounds 
from concrete coding schemes to certify a strict advantage of active 
error correction over passive storage.

\subsection{Passive, active, and the $n$-active hierarchy}

By \textbf{passive} we mean that the noise channel $\Phi$ is applied 
repeatedly with no intervening operation: after $t$ time steps, the 
system has evolved under $\Phi^t$. By \textbf{active} we mean that a 
recovery channel is inserted between every two applications of $\Phi$. 
Within the active setting, it is natural to restrict how many copies of 
the system the recovery operation may touch at once.

\begin{definition}[$n$-active setup]
Let $\Phi : B(H) \to B(H)$ be a noise channel. In the 
\emph{$n$-active setup}, each recovery operator is a quantum channel
\[
\mathcal{R} : B(H^{\otimes n}) \to B(H^{\otimes n}),
\]
giving the decoder access to all $n$ copies of the system at each 
correction step. The case $n=1$ corresponds to recovery acting on a 
single copy. Thus, the capacity of $\Phi$ in the $n$-active setup at time $t$ is defined as
\[
Q_{\mathrm{act},n}^t(\Phi)
:= \frac{1}{n} \sup_{\mathcal{R}_1,\dots,\mathcal{R}_{t-1}} 
Q\!\left( \Phi^{\otimes n} \circ \mathcal{R}_{t-1} \circ \cdots \Phi^{\otimes n} \circ \mathcal{R}_1 \circ \Phi^{\otimes n} \right),
\]
where the maximization is over channels $\mathcal{R}_i : B(H^{\otimes n}) \to B(H^{\otimes n})$, and $Q$ is the asymptotic quantum capacity of the overall map. The classical $n$-active capacity $C_{\mathrm{act},n}^t(\Phi)$ is defined similarly, with $Q$ replaced by the classical capacity $C$.
\end{definition}

Since recovery maps with access to more copies can always simulate those 
with access to fewer, we obtain the inclusion chain
\[
1\text{-active} \,\subset\, 2\text{-active} \,\subset\, 3\text{-active} \,\subset\, \cdots
\]
The following lemma identifies the ceiling of this hierarchy: with 
unlimited active resources, the best achievable rate is just the  quantum capacity of $\Phi$, independent of the number of 
time steps. This makes $Q(\Phi)$-type quantities the natural benchmark 
against which passive performance should be compared.

\begin{lemma}
\label{lem: infty active}
The \emph{$\infty$-active capacity} of a channel $\Phi$ at time $t$, 
defined by
\[
Q_{\mathrm{act},\infty}^t(\Phi)
:= \lim_{n\to\infty} \frac{1}{n} 
\max_{\mathcal{R}_1,\dots,\mathcal{R}_{t-1}} 
Q\!\left( \Phi^{\otimes n} \circ \mathcal{R}_{t-1} \circ \cdots 
\circ \mathcal{R}_1 \circ \Phi^{\otimes n} \right),
\]
where the maximization is over channels 
$\mathcal{R}_i : B(H^{\otimes n}) \to B(H^{\otimes n})$, satisfies
\[
Q_{\mathrm{act},\infty}^t(\Phi) = Q(\Phi)
\qquad \text{for every finite } t \ge 1.
\]
\end{lemma}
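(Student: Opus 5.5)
The plan is to prove the two inequalities $Q_{\mathrm{act},\infty}^t(\Phi)\le Q(\Phi)$ and $Q_{\mathrm{act},\infty}^t(\Phi)\ge Q(\Phi)$ separately. Write $\Gamma_n:=\Phi^{\otimes n}\circ\mathcal{R}_{t-1}\circ\cdots\circ\mathcal{R}_1\circ\Phi^{\otimes n}$.

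\emph{Upper bound.} For any choice of recovery channels, $\Gamma_n$ factors as $\Gamma_n=\mathcal{A}\circ\Phi^{\otimes n}$ with $\mathcal{A}=\Phi^{\otimes n}\circ\mathcal{R}_{t-1}\circ\cdots\circ\mathcal{R}_1$ a channel. Any coding scheme for $\Gamma_n^{\otimes m}$ is therefore a coding scheme for $(\Phi^{\otimes n})^{\otimes m}$: the decoder simply applies $\mathcal{A}^{\otimes m}$ first and then the original decoder. Hence $Q(\Gamma_n)\le Q(\Phi^{\otimes n})$. This is the data-processing/bottleneck inequality, which also follows from $I_c$ monotonicity under post-processing in the regularized formula. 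Next I would invoke the standard identity $Q(\Phi^{\otimes n})=nQ(\Phi)$, which follows directly from the definition of asymptotic capacity as a supremum of achievable rates over blocks $\Phi^{\otimes nm}$. Together these give $\frac1n\max Q(\Gamma_n)\le Q(\Phi)$ for every $n$, so the limit is at most $Q(\Phi)$.

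\emph{Lower bound.} Fix $R<Q(\Phi)$ and $\varepsilon>0$. Using the definition of $Q(\Phi)$, there exist a block length $n$ and a code $(\cE,\cD)$ for $\Phi^{\otimes n}$ with $\log d\ge nR$ and entanglement fidelity $\ge 1-\varepsilon$. I would then take each recovery to be
\[
\mathcal{R}_i=\cE\circ\cD ,
\]
that is, decode and re-encode. Consider $\Gamma_n$ pre-composed with $\cE$ and post-composed with $\cD$. This is $(\cD\Phi^{\otimes n}\cE)^{\circ t}$, a $t$-fold composition of maps each within $\varepsilon$ of the identity on $\cM_d$ in entanglement fidelity. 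A triangle inequality for the purified distance, or equivalently the diamond norm after passing to a code with small worst-case error via expurgation or a minimal-fidelity code, gives error at most $O(t\sqrt{\varepsilon})$. Because $t$ is fixed and finite, this can be made arbitrarily small.

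To obtain a statement about the asymptotic capacity $Q(\Gamma_n)$ rather than a one-shot code, I would instead apply the one-shot argument blockwise. Since $\Gamma_n^{\otimes m}$ is itself of the form $\Gamma_{nm}$ with product recoveries, the same construction at block length $nm$ yields $Q(\Gamma_{nm})\ge nm(R-o(1))$. Letting $n\to\infty$ and then $R\uparrow Q(\Phi)$ gives the matching lower bound. The existence of the limit then follows from the two inequalities squeezing the expression.

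The main obstacle is the error accumulation across the $t$ rounds. Average-case fidelity does not compose well, so I would first convert to codes with small worst-case (diamond-norm) error. For quantum codes this is standard: codes with high entanglement fidelity on the maximally mixed state yield subspace codes with high minimum fidelity at negligible rate loss. Once that conversion is in place, the diamond norm is subadditive under composition, and the total error is at most $t$ times the per-round error.
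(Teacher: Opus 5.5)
Your route is the same as the paper's. The upper bound uses the bottleneck inequality $Q(\mathcal{A}\circ\Phi^{\otimes n})\le Q(\Phi^{\otimes n})\le nQ(\Phi)$; only this easy direction of $Q(\Phi^{\otimes n})=nQ(\Phi)$ is needed. The lower bound uses decode--re-encode recoveries $\mathcal{R}_i=\mathcal{E}\circ\mathcal{D}$ and a triangle inequality over the $t$ rounds. Your purified-distance version of the triangle inequality already works for entanglement fidelity. The reference state $\phi_{RS}$ is the same at every stage, and the purified distance is monotone under $\mathrm{id}_R\otimes\mathcal{M}$. This gives $1-F_e\bigl((\mathcal{D}\circ\Phi^{\otimes n}\circ\mathcal{E})^{t}\bigr)\le t^2\varepsilon$ directly, so the detour through expurgated worst-case codes is unnecessary. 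The paper simply starts from a code that is good in diamond norm.

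The step that fails is the passage from this one-shot code to the asymptotic capacity. For fixed $n$ and fixed $\mathcal{R}_i$, a lower bound on $Q(\Gamma_n)$ needs codes for $\Gamma_n^{\otimes m}$, with the same $\mathcal{R}_i$ on every block, whose error tends to zero. Your sentence about block length $nm$ does not supply these, for three reasons.
\begin{itemize}
\item The construction at length $nm$ uses recoveries built from a code for $\Phi^{\otimes nm}$. These are not of product form $\mathcal{R}_i^{\otimes m}$, so the construction says nothing about $\Gamma_n^{\otimes m}$.
\item Applied to $\Gamma_{nm}$ itself, it again yields only a one-shot code with non-vanishing error, so the argument is circular.
\item Using $\mathcal{E}^{\otimes m},\mathcal{D}^{\otimes m}$ on $\Gamma_n^{\otimes m}$ instead makes the error grow linearly in $m$.
\end{itemize}

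The missing ingredient is an outer-coding or continuity estimate for the logical channel $\mathcal{N}:=\mathcal{D}\circ\Gamma_n\circ\mathcal{E}$, which satisfies $F_e(\mathcal{N})\ge 1-\delta$ with $\delta=t^2\varepsilon$. The chain is as follows.
\begin{itemize}
\item By data processing, $Q(\Gamma_n)\ge Q(\mathcal{N})$.
\item By the regularized formula, $Q(\mathcal{N})\ge I_c(\mathcal{N})\ge I(R\rangle\hat B)_{(\mathrm{id}\otimes\mathcal{N})(\phi)}$.
\item The quantum Fano plus Araki--Lieb computation in the proof of Theorem~\ref{thm: single shot capacity} then gives $Q(\mathcal{N})\ge(1-4\delta)\log d-2h(\delta)$.
\item Hence $\frac1n Q(\Gamma_n)\ge(1-4\delta)R-2h(\delta)/n$.
\end{itemize}
Continuity of $Q$ in diamond norm would serve equally well here.

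Your squeeze also needs the lower bound for all large $N$, not just multiples of $n$. For $N=mn+r$, take recoveries $\mathcal{R}_i^{\otimes m}\otimes\mathrm{id}$, so that $\Gamma_N=\Gamma_n^{\otimes m}\otimes(\Phi^{t})^{\otimes r}$ and $Q(\Gamma_N)\ge mQ(\Gamma_n)$. This gives $\liminf_N\frac1N\max Q(\Gamma_N)\ge\frac1nQ(\Gamma_n)$, and together with your upper bound the limit is squeezed to $Q(\Phi)$.

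The paper's own proof goes from the diamond-norm estimate straight to $Q_{\mathrm{act},\infty}^t(\Phi)\ge r$ and is equally silent on this step, so the fix is needed in either version.
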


\begin{proof}
\emph{Lower bound.} Fix $r < Q(\Phi)$ and $\varepsilon > 0$. By definition 
of the quantum capacity, there exist $n\ge 1$ and channels 
$\mathcal{E},\mathcal{D}$ such that
\[
\big\| \mathcal{D}\circ \Phi^{\otimes n}\circ \mathcal{E} 
- \mathrm{id}_2^{\otimes nr} \big\|_\diamond 
< \frac{\varepsilon}{t}.
\]
Setting $\mathcal{R}_i := \mathcal{E}\circ \mathcal{D}$ for $i = 1,\dots, t-1$ 
and composing with the outer encoder and decoder, the triangle 
inequality for the diamond norm gives
\[
\Big\| \mathcal{D}\circ \Phi^{\otimes n}\circ \mathcal{R}_{t-1}\circ \cdots 
\circ \mathcal{R}_1\circ \Phi^{\otimes n}\circ \mathcal{E} 
- \mathrm{id}_2^{\otimes nr} \Big\|_\diamond < \varepsilon,
\]
which shows $Q_{\mathrm{act},\infty}^t(\Phi) \ge r$. Taking 
$r \to Q(\Phi)$ yields $Q_{\mathrm{act},\infty}^t(\Phi) \ge Q(\Phi)$.

\emph{Upper bound.} For any choice of recovery channels 
$\mathcal{R}_1, \dots, \mathcal{R}_{t-1}$, the data-processing inequality 
applied to the single channel $\Phi^{\otimes n}$ at the first time step 
(with the remaining composition absorbed into the decoder) gives
$Q(\Phi^{\otimes n}\circ \mathcal{R}_{t-1}\circ \cdots \circ \Phi^{\otimes n}) 
\le Q(\Phi^{\otimes n}) = n\, Q(\Phi)$ by additivity.
Dividing by $n$ completes the proof.
\end{proof}

\subsection{A recipe for certifying active advantage}
\label{subsec: recipe}

The comparison between active and passive strategies proceeds in three 
steps.

\begin{enumerate}
\item \textbf{Passive upper bound.} Apply 
Theorem~\ref{thm: Capacity upper bound} to the GNS-symmetric noise 
channel $\Phi$. This gives the explicit bound
\[
Q(\Phi^{2t}) \;\le\; \log(\max_k d_k) + e^{-2\alpha_c t}\log \Lambda_c(\mathcal{P}),
\]
which controls any passive storage strategy at time $2t$. 

\item \textbf{Active lower bound.} Fix an $n$-qubit code with encoder 
$\mathcal{E}$ and decoder $\mathcal{D}$, and use 
$\mathcal{R} = \mathcal{E}\circ \mathcal{D}$ as the recovery operation. 
The effective logical channel per time step is 
$\mathcal{D}\circ \Phi^{\otimes n}\circ \mathcal{E}$; a lower bound on its 
quantum capacity (for instance, via hashing) yields an $n$-active lower 
bound for $\Phi^t$ after rescaling by $1/n$ to account for the physical 
qubit cost.

\item \textbf{Crossover time.} The smallest $t^\star$ at which the 
passive UB drops below the active LB certifies that active error 
correction strictly outperforms passive storage for all $t \ge t^\star$.
\end{enumerate}

We now execute this recipe on Pauli noise.

\subsection{Pauli noise and the 5-qubit code}

Consider the single-qubit Pauli channel
\[
\mathcal{T}_{\vec p}(\rho) 
= p_0\,\rho + p_x\,\sigma_x \rho \sigma_x 
+ p_y\,\sigma_y \rho \sigma_y + p_z\,\sigma_z \rho \sigma_z,
\]
with $\vec p = (p_0, p_x, p_y, p_z)$. The channel is GNS-symmetric with 
respect to the maximally mixed state, and its peripheral subspace is 
trivial whenever $\vec p$ has full support.

\paragraph{Passive upper bound.}
Inequalities~\eqref{eq: alpha lower bound} and~\eqref{eq: Lambda upperbound} 
yield $\Lambda_c(\mathcal{P}) \le 4$  and since the eigenvalues of $\mathcal{T}_{\vec p}$ are $1$ and $1 - 2(p_i + p_j)$ for $\{i,j\}\subset\{x,y,z\}$, we have
$\alpha_c \ge \lambda/\ln(10\,\Lambda_c)$, where
\[
\lambda 
= \min_{\{i,j\}\subset\{x,y,z\}} -\log\bigl| 1 - 2(p_i + p_j)\bigr|.
\]
Substituting into Theorem~\ref{thm: Capacity upper bound} gives an 
explicit passive upper bound on $Q(\mathcal{T}_{\vec p}^{2t})$ for every $t$.

A simple passive lower bound is also available: concatenating the channel with 
itself $t$ times yields another Pauli channel $\mathcal{T}_{\vec p_t}$ 
whose parameters $\vec p_t$ are straightforward to compute, and the hashing 
bound~\cite{bennett1996mixed}
\begin{equation}
\label{eq: hashing}
Q(\mathcal{T}_{\vec q}) \;\ge\; 1 - H(\vec q)
\end{equation}
applies directly.

\paragraph{5-active scheme from the 5-qubit code.}
Let $\mathcal{E}_5$ and $\mathcal{D}_5$ denote the encoder and decoder of 
the 5-qubit code $[[5,1,3]]$, and take the recovery operation to be 
$\mathcal{R} = \mathcal{E}_5\circ \mathcal{D}_5$. Inserting $\mathcal{R}$ 
between each application of $\mathcal{T}_{\vec p}^{\otimes 5}$ is 
equivalent, at the logical level, to a new Pauli channel
\[
\widetilde{\mathcal{T}}_{\vec q} 
\;:=\; \mathcal{D}_5\circ \mathcal{T}_{\vec p}^{\otimes 5}\circ \mathcal{E}_5,
\]
whose parameters $\vec q$ are a fixed polynomial function of $\vec p$ 
(computable from the weight enumerator of the code). Applying 
\eqref{eq: hashing} to $\widetilde{\mathcal{T}}_{\vec q}$ and rescaling by 
$1/5$ to account for the physical qubit cost yields a 5-active lower bound 
on $Q(\mathcal{T}_{\vec p}^{\,t})$.

\paragraph{Recursive concatenation: a 25-active scheme.}
Concatenating the 5-qubit code with itself produces a $[[25,1,9]]$-style 
code with encoder and decoder
\[
\mathcal{E}_{25} = \mathcal{E}_5^{\otimes 5}\circ \mathcal{E}_5,
\qquad
\mathcal{D}_{25} = \mathcal{D}_5 \circ \mathcal{D}_5^{\otimes 5}.
\]
Using $\mathcal{R} = \mathcal{E}_{25}\circ \mathcal{D}_{25}$ as the 
recovery operation gives a 25-active scheme, and the effective logical 
channel $\mathcal{D}_{25}\circ \mathcal{T}_{\vec p}^{\otimes 25}\circ \mathcal{E}_{25}$ 
is again Pauli. The hashing bound, rescaled by $1/25$, produces a 
stronger active LB at the cost of a larger block size. Concatenation 
therefore yields a family of $n$-active lower bounds indexed by the code 
depth, each trading physical qubits for a wider regime of provable 
active advantage.

\paragraph{Crossover.}
Figure~\ref{fig: pauli channel} plots all four bounds — the passive UB 
from Theorem~\ref{thm: Capacity upper bound}, the passive hashing LB, 
and the 5- and 25-active hashing LBs — for \\
$\vec p = (0.9986,\, 0.00047,\, 0.00047,\, 0.00047)$. Around 
$t \approx 8{,}000$, the 5-active LB crosses the passive UB, certifying 
that active error correction strictly outperforms any passive strategy 
beyond this point; the 25-active LB pushes the crossover substantially 
earlier. Tighter passive UBs (for example those specialized to Pauli 
channels, discussed below) would shift the crossover further to the 
left, strengthening the conclusion.

\begin{figure}
\centering
\begin{tikzpicture}
\begin{axis}[
    xlabel={Channel iteration $t$},
    ylabel={Capacity},
    xmin=0, xmax=10000,
    ymin=0, ymax=0.2,
    grid=major,
    legend pos=north east,
    width=13cm, height=8cm,
    thick,
    each nth point=40,
]
\addplot[color=blue]
table[col sep=comma, x index=0, y index=8]
{figs/plot_data/pauli_channel_capacities_0.9986_0.00047_0.00047_0.00047_T20000.csv};
\addlegendentry{Passive quantum UB (Thm.~\ref{thm: Capacity upper bound})}

\addplot[color=green!50!black]
table[col sep=comma, x index=0, y index=6]
{figs/plot_data/pauli_channel_capacities_0.9986_0.00047_0.00047_0.00047_T20000.csv};
\addlegendentry{Active quantum LB (5-qubit code)}

\addplot[color=red]
table[col sep=comma, x index=0, y index=7]
{figs/plot_data/pauli_channel_capacities_0.9986_0.00047_0.00047_0.00047_T20000.csv};
\addlegendentry{Active quantum LB (25-qubit code)}

\addplot[color=purple]
table[col sep=comma, x index=0, y index=5]
{figs/plot_data/pauli_channel_capacities_0.9986_0.00047_0.00047_0.00047_T20000.csv};
\addlegendentry{Passive quantum LB (hashing)}
\end{axis}
\end{tikzpicture}
\caption{Passive and active bounds on the quantum capacity of the 
iterated Pauli channel with 
$\vec p = (0.9986,\, 0.00047,\, 0.00047,\, 0.00047)$. The passive upper 
bound is obtained from Theorem~\ref{thm: Capacity upper bound}; the 
passive lower bound is the hashing bound applied directly to 
$\mathcal{T}_{\vec p}^{\,t}$; the 5-active and 25-active lower bounds 
are obtained by applying the hashing bound to the effective logical 
channels of the 5-qubit code and its recursive concatenation, rescaled 
by $1/5$ and $1/25$ respectively.}
\label{fig: pauli channel}
\end{figure}

\begin{remark}
Sharper upper bounds on the quantum capacity of Pauli 
channels~\cite{cerf2000pauli} would detect active advantage even at 
stronger noise levels, where our bound does not. The point of the 
present analysis is that the passive UB of 
Theorem~\ref{thm: Capacity upper bound} applies uniformly to all 
GNS-symmetric channels, at the cost of being loose on the Pauli 
subclass.
\end{remark}
\bibliography{refrence}
\bibliographystyle{plain}
\end{document}